\documentclass[runningheads]{llncs}
\usepackage[T1]{fontenc}
\usepackage{amsmath}
\usepackage{amssymb}
\usepackage{calculi}
\usepackage{microtype}
\usepackage{color} 
\usepackage[bookmarks,bookmarksopen,bookmarksdepth=2]{hyperref}
\usepackage[colorinlistoftodos,prependcaption,textsize=tiny]{todonotes}

\title{Preventing Out-of-Gas Exceptions by Typing\thanks{The work of Luca Aceto, Mohammad Hamdaqa and Stian Lybech was supported by the Icelandic Research Fund Grant No.\@ 218202-05(1-3).}}

\author{%
Luca Aceto\inst{1,2}\orcidID{0000-0002-2197-3018} \and
Daniele Gorla\inst{3}\orcidID{0000-0001-8859-9844} \and
Stian Lybech\inst{1}\orcidID{0000-0001-8219-2285} \and
Mohammad Hamdaqa\inst{1,4}\orcidID{0000-0003-4927-2755}
}

\authorrunning{L. Aceto et al.}

\institute{
Department of Computer Science, Reykjavik University, Iceland \\ \email{\{luca,stian21\}@ru.is} \and
Gran Sasso Science Institute, L'Aquila, Italy \and 
Sapienza, Università di Roma, Rome, Italy \\ \email{gorla@di.uniroma1.it} \and 
Polytechnique Montreal, Montreal, Quebec, Canada \\
\email{mohammad-adnan.hamdaqa@polymtl.ca}
}

\raggedbottom

\newcommand{\code}[1]{\texttt{#1}}
\newcommand{\TINYSOL}[0]{\textsc{TinySol}}
\newcommand{\EXTRACT}[2][\Delta]{\ensuremath{\mathcal{E}^{\Gamma}_{#1}(#2)}}

\def\VALUES{\ensuremath{\normalfont\text{\sffamily Val}}}
\def\EXPR{\ensuremath{\normalfont\text{\sffamily Exp}}}
\def\STM{\ensuremath{\normalfont\text{\sffamily Stm}}}
\def\LVAL{\ensuremath{\normalfont\text{\sffamily LVal}}}
\def\MVAR{\ensuremath{\normalfont\text{\sffamily MVar}}}
\def\TRANSACTIONS{\ensuremath{\normalfont\text{\sffamily Tr}}}
\def\BLOCKCHAINS{\ensuremath{\SETNAME{B}}}

\def\MNAMES{\ensuremath{\normalfont\text{\sffamily MNames}}} 
\def\VNAMES{\ensuremath{\normalfont\text{\sffamily VNames}}} 
\def\FNAMES{\ensuremath{\normalfont\text{\sffamily FNames}}} 
\def\ANAMES{\ensuremath{\normalfont\text{\sffamily ANames}}} 

\newcommand{\ENV}[1]{\ensuremath{\normalfont\text{\sffamily env}_{#1}}}
\newcommand{\DEC}[1]{\ensuremath{\normalfont\text{\sffamily Dec}_{#1}}}
\newcommand{\SETENV}[1]{\ensuremath{\normalfont\text{\sffamily Env}_{#1}}}

\def\TRUE{\ensuremath{\text{\sffamily T}}}
\def\FALSE{\ensuremath{\text{\sffamily F}}}

\newcommand{\TRANSACT}[5]{\ensuremath{#1\code{->}#2\code{.}#3\code{(}#4\code{):}#5}}
\newcommand{\DEL}[1]{\ensuremath{\normalfont\text{\sffamily del}(#1)}}
\newcommand{\EXC}[1]{\ensuremath{\normalfont\text{\sffamily exc}(#1)}}
\newcommand{\CONF}[1]{\ensuremath{\left<#1\right>}}
\newcommand{\EXTEND}[2]{\ensuremath{\kern-2pt\left[#1 \mapsto #2\right]}}

\def\ITOP{\ensuremath{I^\top}}

\begin{document}
\maketitle
\setcounter{footnote}{0}

\begin{abstract}
We continue the development of \TINYSOL, a minimal object-oriented language based on Solidity, the standard smart-contract language used for the Ethereum platform. 
We first extend \TINYSOL{} with exceptions and a gas mechanism, and equip it with a small-step operational semantics. 
Introducing the gas mechanism is fundamental for modelling real-life smart contracts in \TINYSOL, since this is the way in which termination of Ethereum smart contracts is usually ensured. 
We then devise a type system for smart contracts guaranteeing  that such programs never run out of gas at runtime. 
This is a desirable property for smart contracts, since a transaction that runs out of gas is aborted, but the price paid to run the code is not returned to the invoker. 

\keywords{Smart-contract language \and Type system \and Static analysis \and Semantics of programming languages.}
\end{abstract}

\section{Introduction}
Smart contracts running on a blockchain can be viewed as a collective system: they are typically deployed in large numbers and interact in an adversarial environment, each pursuing its own individual goal, without specific centralised control. Many languages have been deployed to program them, with
the Ethereum Virtual Machine language \cite{EVM} (EVM) as one of the most successful proposals. It is a Turing-complete language and, thus, it is in principle undecidable whether a given EVM smart contract will halt or not.
This is problematic in a blockchain setting, since smart contracts may be called in transactions, which are executed by miners as part of their attempts to find the next block, as well as when validating existing blocks.
If a transaction were to run forever, it would therefore in effect block the miner from making progress, which for instance could be used to cause a denial of service for the entire network.

To limit such undesirable scenarios, EVM introduced the concept of \emph{gas}, which effectively ensures termination of all transactions.
This mechanism works as follows (see \cite{grishchenko2018} for details):
The gas represents a unit of computation, and thus also a unit of work for the miner.
Each EVM instruction has an associated \emph{gas cost}\footnote{Usually 3--5 units, although some complex operations such as \code{CREATE} have a much higher cost. 
An overview of the EVM instructions and their gas costs can be found at \url{https://evm.codes}.}, and the total gas cost of a transaction is thus proportional to the amount of computational work required by the miner to execute the transaction.
Whenever a user schedules a transaction, two additional parameters have to be specified: the \emph{gas limit} and the \emph{gas price}.
The gas limit is an upper bound on the gas cost of the transaction.
If this limit is exceeded, an out-of-gas exception is thrown, the transaction aborts and its effects are rolled back.
The gas price is the amount of currency that the user is willing to pay to the miner for each unit of gas consumed in the transaction.
This ensures that users will not schedule long-running transactions and just pick an arbitrarily high gas limit, since the gas price will still be paid for each unit of gas consumed in a transaction, even if the transaction aborts with an out-of-gas exception.

The concept of gas is, of course, also inherent in Solidity, since this language compiles to EVM.
However, the original presentation of \TINYSOL{}~\cite{bartoletti2019tinysol}, a minimal object-oriented language based on Solidity, does not include such a mechanism. This limits its ability to realistically model, e.g., aborting transactions and rollback.
In particular, the latter phenomenon is modelled in the operational semantics of that language through a rule with an undecidable premise. 
In our previous work on \TINYSOL{}~\cite{AGL24}, we simply decided to omit this rule, thus in effect allowing non-terminating transactions and also leaving transactions stuck, if an exception-raising 
command were ever to be reached.

In the present paper, we therefore add a gas mechanism to our version of \TINYSOL{} and present a first attempt at a type system ensuring that transactions never run of out gas at runtime. 
The syntax of the language is provided in Section~\ref{sec:synt} and is essentially the one we used in \cite{AGL24}, where we gave a big-step operational semantics for \TINYSOL{}.
However, handling exceptions in a big-step semantics is rather unnatural, as exceptions can be raised at any intermediate point of a computation. For this reason, a first contribution of this work is to provide a small-step operational semantics for \TINYSOL{}; this is done in Section~\ref{sec:sem}.

We then move to the core contribution of this paper: a type system for guaranteeing that a transaction never runs out-of-gas, that is, an out-of-gas exception is never raised at runtime. 
The type system is defined in Section~\ref{sec:types} and its properties are stated in Section~\ref{sec:prop}. 
Never running out-of-gas is essentially a problem of checking termination of a program, a well-known undecidable problem in its general formulation. 
To make it decidable, and efficiently checkable, we have to reduce the set of programs that the type system accepts; this is discussed thoroughly in Section~\ref{sec:lim}.
We conclude the paper in Section~\ref{sec:concl}, where we also hint at possible directions for future enhancements of this work. 
Due to space limitations, proofs and supplementary material are relegated to the appendices.

\paragraph{Related work}
The literature on methods for proving that programs terminate is vast and we cannot do justice to related work fully in this paper. 
We therefore limit ourselves to mentioning the survey of program termination proof techniques for sequential programs given in~\cite{CookPR11} and the textbook~\cite{NielsonN1992}, which offers an introduction to Hoare-style methods to prove total correctness and to give bounds on the execution time of imperative programs~\cite{Nielson87}. 

In the setting of smart-contract languages, as observed in~\cite{GenetJS20}, proofs of termination are non-trivial even in the presence of a gas mechanism. 
That article presents the first mechanised proof of termination of contracts written in EVM bytecode using minimal assumptions on the gas cost of operations. 
Efficient static-analysis techniques for detecting gas-focused vulnerabilities in smart contracts and their implementation in the tool MadMax have been presented in~\cite{GrechKJBSS18}. 

Among the plethora of studies on type systems for ensuring program termination in several settings, we mention the work by Boudol on typing termination in a higher-order concurrent imperative language; see~\cite{Boudol10} and the references therein. 
Our contribution is inspired by the work by Deng and Sangiorgi in~\cite{PITERM} for the \PI-calculus. 
In particular, we build on their type system 1, which imposes limitations similar to those for our type system, e.g. it prohibits (the \PI-calculus equivalent of) recursive function calls.

\section{\TINYSOL{} with Gas and Exceptions}\label{sec:synt}
\begin{figure}[t]\centering
\begin{syntax}[h]
  DF \in \DEC{F} \IS \epsilon \OR \code{field $p$ := $v$;} DF       \tabularnewline
  DM \in \DEC{M} \IS \epsilon \OR \code{$f(\VEC{x})$ \{ $S$ \} } DM \tabularnewline
  DC \in \DEC{C} \IS \epsilon \OR \code{contract $X$ \{ }           \tabularnewline
                  & & \qquad\ \ \ \code{field balance := $n$; $DF$} \tabularnewline
                  & & \qquad\ \ \ \code{send() \{ skip \} $DM$}     \tabularnewline
                  & & \qquad\ \code{\} $DC$}                        \tabularnewline
  m \in \MVAR \IS \code{this} \OR \code{sender} \OR \code{value}    \tabularnewline
  L \in \LVAL \IS x \OR \code{this.$p$}                             \tabularnewline%
  e \in \EXPR \IS v \OR x \OR m \quad |\quad \code{$e$.balance} \OR \code{ $e$.$p$ } \OR \op(\VEC{e}) \tabularnewline
  S \in \STM  \IS \code{skip} \OR \code{throw} \OR \code{var $x$ := $e$ in $S$} \OR \code{$L$ := $e$} \OR \code{$S_1$;$S_2$}
            \ISOR \code{if $e$ then $S_{\TRUE}$ else $S_{\FALSE}$ } \OR \code{for $e$ do $S$} \OR \code{$e_1$.$f(\VEC{e})$:$e_2$} \tabularnewline
  v \in \VALUES \IS \mathbb{Z} \UNION \mathbb{B} \UNION \ANAMES     \tabularnewline\tabularnewline
\end{syntax}

\noindent where $x, y \in \VNAMES$ (variable names), $p, q \in \FNAMES$ (field names), \\ $X, Y \in \ANAMES$ (address names), $f, g \in \MNAMES$ (method names).
\caption{The syntax of \TINYSOL.}
\label{fig:syntax_tinysol}
\end{figure}

The syntax of \TINYSOL{} is given in Figure~\ref{fig:syntax_tinysol}, where we use the notation $\mathop{\widetilde\cdot}$ to denote (possibly empty) sequences of items.
The set of \emph{values}, ranged over by $v$, is formed by the sets of integers $\mathbb{Z}$, ranged over by $n$, booleans $\mathbb{B} = \SET{\TRUE, \FALSE}$, ranged over by $b$, and address names $\ANAMES$, ranged over by $X, Y$.
 
We introduce explicit declarations for fields $DF$, methods $DM$, and contracts $DC$. 
The latter also encompasses declarations of accounts: an \emph{account} is a contract that contains only the declarations of a special field \code{balance} and of a single special method \code{send()}, which does nothing and is used only for transferring funds to the account. 
By contrast, a contract usually contains other declarations of fields and methods.
For the sake of simplicity, we make no syntactic distinction between an account and a contract but, to distinguish them, we can assume that the set $\ANAMES$ is split into contract addresses and account addresses.

We have four `magic' keywords in our syntax. 
The keyword \code{balance} (type \code{int}) stands for a special field recording the current balance of the contract (or account).
It can be read from, but not directly assigned to, except through method calls.
This ensures that the total amount of currency `on-chain' remains constant during execution. 
The keyword \code{value} (type \code{int}) denotes a special variable that is bound to the currency amount transferred with a method call.
The special variable \code{sender} (type \code{address}) is always bound to the address of the caller of a method. 
Finally, \code{this} (type \code{address}) is a special variable that is always bound to the address of the contract containing the currently executing method.  
The last three of these are local variables, and we collectively refer to them as `magic variables' $m \in \MVAR$.

The declaration of variables and fields ($DV$ and $DF$) are very alike: the main difference is that variable bindings will be created at runtime (and with scoped visibility), hence we can let the initial assignment be an expression $e$; by contrast, the initial assignment to fields must be a value $v$.

The core part of the language is the declaration of expressions $e$ and statements $S$, which are almost the same as in \cite{bartoletti2019tinysol}.
The main differences are: 
(1) we introduce fields $p$ in expressions, instead of keys; 
(2) we explicitly distinguish between (global) fields and (local) variables, where the latter are declared with a scope limited to a statement $S$; 
(3) we introduce explicit \emph{lvalues} $L$, to restrict what can appear on the left-hand side of an assignment (which, in particular, ensures that the special field \code{balance} can never be assigned to directly); and 
(4) we have a \code{for} loop, instead of the more general (and less controllable) \code{while} loop.
This last difference is only introduced for the sake of the type system we are going to present and, indeed, in \cite{AGL24} the \code{while} loop is present.

We now use \TINYSOL{} to describe transactions and blockchains.
A \emph{transaction} is simply a call, where the caller is an account $A$, rather than a contract, together with some gas quantity, used to specify how much computational cost the transaction can consume.
We denote this by writing \mbox{\code{$A$->$X$.$f(\VEC{v})$:$(n,g)$}}, which expresses that the account $A$ calls the method $f$ on the contract (residing at address) $X$, with actual parameters $\VEC{v}$, transferring the amount of currency $n$   with the call, and with an upper bound on the computational cost of the call given by the gas amount $g$ (a positive integer).
Hence, blockchains can be defined as follows:

\begin{definition}[Syntax of blockchains]
A \emph{blockchain} $B \in \BLOCKCHAINS$ is a list of initial contract declarations $DC$, followed by a sequence of transactions $T \in \TRANSACTIONS$:
\begin{center}
\begin{syntax}[h]
  B \in \BLOCKCHAINS  \DCLSYM \code{$DC$ $T$} \qquad\qquad
  T \in \TRANSACTIONS \DCLSYM \epsilon \OR \TRANSACT{A}{X}{f}{\VEC{v}}{(n,g)},\, T 
\end{syntax}
\end{center}
%
where $g>0$ is the \emph{gas limit} of the transaction.
Notationally, a blockchain with an empty $DC$ will be simply written as the sequence of transactions.
\end{definition}

\section{A Small-step Semantics with Exceptions and Gas}\label{sec:sem}
The semantics given in \cite{AGL24} describes the execution of a complex statement as one single computational step.
This is sensible from the point of view of a user, since a transaction is a method call 
and transactions are atomic.
However, we can also create a small-step semantics for statements in \TINYSOL{} to describe the individual computational steps involved; this form of semantics is more suitable than a big-step one when exception handling is introduced.

To define the semantics, we need some environments to record the bindings of variables (including the three magic variable names \code{this}, \code{sender} and \code{value}), fields, methods, and contracts.
We define them as sets of partial functions:

\begin{definition}[Binding model]
The binding model consists of the following:
\begin{align*}
  \ENV{V} \in \SETENV{V} & : \VNAMES \UNION \MVAR \PARTIAL \VALUES                &
  \ENV{S} \in \SETENV{S} & : \ANAMES \PARTIAL \SETENV{F}                          \\
  \ENV{F} \in \SETENV{F} & : \FNAMES \UNION \SET{\code{balance}} \PARTIAL \VALUES &
  \ENV{T} \in \SETENV{T} & : \ANAMES \PARTIAL \SETENV{M}                          \\
  \ENV{M} \in \SETENV{M} & : \MNAMES \PARTIAL \VNAMES^* \times \STM   
\end{align*}
We write each environment $\ENV{X}$, for  $X \in \SET{V, F, M, S, T}$, as an unordered sequence of pairs $(d, c)$ where $d \in \DOM{\ENV{X}}$ and $c \in \CODOM{\ENV{X}}$.
The notation $\ENV{X}\EXTEND{d}{c}$ denotes the update of $\ENV{X}$ mapping $d$ to $c$.
We write $\ENV{X}^{\EMPTYSET}$ for the empty $X$-environment.
\end{definition}

To simplify the notation, when two or more environments appear together, we shall use the convention of writing the subscripts together; e.g., we write $\ENV{MF}$ instead of $\ENV{M}, \ENV{F}$, and $\ENV{SV}'$ instead of $\ENV{S}', \ENV{V}'$.

Statements are executed in the context of the following environments: a \emph{method table} $\ENV{T}$, which maps addresses to method environments, and a \emph{state} $\ENV{SV}$, which maps addresses to lists of fields and their values, and local variables to their values.
Thus, for each contract, we have the list of methods it declares and its current state.
Of course, the method table is constant, once all declarations are performed, whereas the state will change during the evaluation of a program.

The semantics relies on the concept of a \emph{stack} $Q$ that holds the code that has not yet been executed and the variable environments needed to switch execution frames during a call. 

\begin{definition}[Syntax of stacks]\label{def:stack_syntax}
The syntax of a \emph{stack} $Q$ is given as follows:
\begin{center}\normalfont
\begin{syntax}[h]
  Q \IS \bot \OR S :: Q \OR \ENV{V} :: Q \OR \DEL{x} :: Q \OR \EXC{l} :: Q \tabularnewline
  l \IS \code{rte} \OR \code{neg} \OR \code{oog} \OR \code{pge}
\end{syntax}
\end{center}
\end{definition}

In the above definition, $\bot$ denotes the bottom of the stack, $S$ is a statement, $\ENV{V}$ is a variable environment, $\DEL{x}$ marks the end of the scope of a locally declared variable $x$, and $\EXC{l}$ denotes raising of an exception of kind $l$, whose possibilities are as follows:
$\code{rte}$ denotes an exception raised at runtime because of ill-formed code (e.g., when attempting to access a field not present in a contract or when invoking a method not provided by a contract);
$\code{neg}$ denotes an exception raised when a contract's balance becomes negative; 
$\code{oog}$ denotes an exception raised when a transaction consumes all its gas; and
$\code{pge}$ denotes an exception raised by a $\code{throw}$ command. 

In the small-step semantics, we always execute the element at the top of the stack.
Transitions have the form $\ENV{T} \vdash \CONF{Q, \ENV{SV}, g} \trans \CONF{Q', \ENV{SV}', g'}$, where $g\geq g' \geq 0$, and are defined by the rules in Figure~\ref{fig:tinysolgas_semantics_statements1_sss}, which rely on the (standard) semantics for expression evaluation (written $\trans_e$) taken from \cite{AGL24} and whose rules are in Appendix~\ref{app:omitted_semantics_rules}.
To keep the language and the semantics easy, we make a few simplifying assumptions: 
\begin{enumerate}
\item We assume a unitary gas cost for all \TINYSOL{} commands, rather than having different prices for different instructions.
\item We only let commands consume gas;
    this is in contrast to EVM, which also has gas costs for expression operations such as \code{ADD}, \code{MUL} etc.
\item We omit the \emph{gas price}, since the \TINYSOL{} model of blockchains does not include the concept of miners and blocks.
    Instead, we shall simply say that the actual amount of gas consumed in a transaction is subtracted from the user's account \code{balance}.
\end{enumerate}

\begin{figure}\centering
\begin{semantics}
  \RULE[ss-skip][ts2_sss_skip](g \geq 1) 
    { }
    { \ENV{T} \vdash \CONF{\code{skip} :: Q, \ENV{SV}, g} \trans \CONF{Q, \ENV{SV}, g-1} }

  \RULE[ss-seq][ts2_sss_seq](g \geq 1)
    { }
    { \ENV{T} \vdash \CONF{\code{$S_1$;$S_2$} :: Q, \ENV{SV}, g} \trans \CONF{S_1 :: S_2 :: Q, \ENV{SV}, g} }

  \RULE[ss-if][ts2_sss_if](g \geq 1)
    { \ENV{SV} \vdash e \trans_e b \in \SET{\TRUE, \FALSE}}
    {
    \begin{array}{r @{~} l}
      \ENV{T} \vdash  & \CONF{\code{if $e$ then $S_{\TRUE}$ else $S_{\FALSE}$} :: Q, \ENV{SV}, g} \\
      & \qquad \trans \CONF{S_b :: Q, \ENV{SV}, g-1} 
    \end{array}
    }

  \RULE[ss-for$_\TRUE$][ts2_sss_fortrue](g \geq 1)
    { \ENV{SV} \vdash e \trans v \AND v \geq 1 \AND v' = v-1}
    { 
    \begin{array}{r @{~} l}
      \ENV{T} \vdash & \CONF{\code{for $e$ do $S$} :: Q, \ENV{SV}, g}        \\ 
      & \qquad \trans \CONF{S :: \code{for $v'$ do $S$} :: Q, \ENV{SV}, g-1} 
    \end{array}
    } 

  \RULE[ss-for$_\FALSE$][ts2_sss_forfalse](g \geq 1)
    { \ENV{SV} \vdash e \trans v \AND v < 1}
    { \ENV{T}  \vdash \CONF{\code{for $e$ do $S$} :: Q, \ENV{SV}, g} \trans \CONF{Q, \ENV{SV}, g-1} } 

  \RULE[ss-decv][ts2_sss_decv](g \geq 1)
    { x \notin \DOM{\ENV{V}} \AND \ENV{SV} \vdash e \trans_e v }
    { 
    \begin{array}{r @{~} l}
      \ENV{T} \vdash & \CONF{\code{var $x$ := $e$ in $S$} :: Q, \ENV{SV}, g}   \\
      & \qquad \trans \CONF{S :: \DEL{x} :: Q, \ENV{S}, ((x,v) , \ENV{V}), g-1} 
    \end{array}
    }

  \RULE[ss-assv][ts2_sss_assv](g \geq 1)
    {x \in \DOM{\ENV{V}} \AND \ENV{SV} \vdash e \trans_e v }
    { 
    \begin{array}{r @{~} l}
      \ENV{T} \vdash & \CONF{\code{$x$ := $e$} :: Q, \ENV{SV}, g} \\
      & \qquad \trans \CONF{Q, \ENV{S}, \ENV{V}\EXTEND{x}{v}, g-1} 
    \end{array}
    }

  \RULE[ss-assf][ts2_sss_assf](g \geq 1)
    {
    \begin{array}{l}
      \ENV{V}(\code{this}) = X \AND \ENV{S}(X) = \ENV{P} \\ 
      p \in \DOM{\ENV{P}} \AND \ENV{SV} \vdash e \trans_e v 
    \end{array}
    }
    { 
    \begin{array}{r @{~} l}
      \ENV{T} \vdash & \CONF{\code{this.$p$ := $e$} :: Q, \ENV{SV}, g} \\
      & \qquad \trans \CONF{Q, \ENV{S}\EXTEND{X}{\ENV{P}\EXTEND{p}{v}}, \ENV{V}, g-1}  
    \end{array}
    }

  \RULE[ss-call][ts2_sss_call](g \geq 1)
    { 
    \begin{array}{lllll}
      \ENV{SV} \vdash e_1 \trans_e Y
      &\quad &
      \ENV{SV} \vdash e_2 \trans_e n
      &\quad &
      \ENV{SV} \vdash \VEC{e} \trans_e \VEC{v}
    \\
      \ENV{V}(\code{this}) = X             
      &\quad &
      \ENV{S}(X) =\ENV{P}^X
      &\quad &
      n \leq \ENV{P}^X(\code{balance})
      \\
      \ENV{T}(Y))(f) = (\VEC{x}, S)
      &&
      |\VEC{x}| = |\VEC{v}| = k
      &&
      \ENV{S}(Y) = \ENV{P}^Y
      \\
    \multicolumn{5}{l}{\ENV{V}' \!=\! (\code{this}, Y), (\code{sender}, X) , (\code{value}, n) , (x_1, v_1) , \ldots , (x_k, v_k)  }
    \\
    \multicolumn{5}{l}{\ENV{S}' = \ENV{S}
    \EXTEND{X}{\ENV{P}^X [\code{balance \!-=\! } n]}
    \EXTEND{Y}{\ENV{P}^Y [\code{balance \!+=\! } n]} }
    \end{array}
    }
    { \ENV{T} \vdash \CONF{\code{$e_1$.$f$($\VEC{e}$):$e_2$} :: Q, \ENV{SV}, g} \trans \CONF{S :: \ENV{V} :: Q, \ENV{SV}', g-1} }

  \RULE[ss-throw][ts2_sss_throw](g \geq 1)
    { }
    { \ENV{T} \vdash \CONF{\code{throw} :: Q, \ENV{SV}, g} \trans \CONF{\EXC{\code{pge}} :: Q, \ENV{SV}, g} }

  \RULE[ss-oog][ts2_sss_oog]
    { }
    { \ENV{T} \vdash \CONF{S :: Q, \ENV{SV}, 0} \trans \CONF{\EXC{\code{oog}} :: S :: Q, \ENV{SV}, 0} }

  \RULE[ss-delv][ts2_sss_delv]
    { }
    { \ENV{T} \vdash \CONF{\DEL{x} :: Q, \ENV{S}, ((x,v) , \ENV{V}), g} \trans \CONF{Q, \ENV{SV}, g} }

  \RULE[ss-return][ts2_sss_return]
    { }
    { \ENV{T} \vdash \CONF{\ENV{V}' :: Q, \ENV{SV}, g} \trans \CONF{Q, \ENV{S}, \ENV{V}', g} }

\end{semantics}
\caption{Small-step semantics transition rules for statements with gas.}
\label{fig:tinysolgas_semantics_statements1_sss}
\end{figure}

With these assumptions, gas consumption is modelled by introducing the side condition $g \geq 1$ in most of the rules, since $g$ is decremented by $1$ in the reduct of these rules.\footnote{%
It is conceptually easy, albeit notationally cumbersome, to provide each operation and command with its own cost $c(\cdot)$ and modify the semantics to take such costs into account in the side conditions and conclusions of each rule. 
This would relax the simplifying assumptions 1 and 2 given above.
}
The only rules that do not consume gas are \nameref{ts2_sss_seq}, since this rule does not describe the execution of an operation but merely transforms the top element of the stack, and \nameref{ts2_sss_throw}. 
Notice, however, that those two rules  have the side condition $g \geq 1$ to ensure determinism of the operational semantics: in this way, when the gas becomes 0, only rule \nameref{ts2_sss_oog} can be used.
Finally, the only rules that do not check the remaining amount of gas are \nameref{ts2_sss_delv} and \nameref{ts2_sss_return}, since these rules do not correspond to commands, but are only used to restore a previous state of the variable environment.

Rule \nameref{ts2_sss_throw} says that a \code{throw} command is simply converted into an exception $\EXC{\code{pge}}$, denoting that the exception was thrown from within the program.
The remaining exception labels are used if one of the side conditions fails; so, the execution of most of the statements can lead to an exception in several different ways:
\begin{itemize}
\item An \textbf{o}ut-\textbf{o}f-\textbf{g}as exception \code{oog} is thrown if we have a command on top of the stack but no gas is left, see rule \nameref{ts2_sss_oog}.
\item A \textbf{neg}ative balance exception \code{neg} is thrown if executing the statement would lead to a negative \code{balance}.
  Hence, this exception can only be thrown from the rule \nameref{ts2_sss_call}, if the side condition $n \leq \ENV{P}^X(\code{balance})$ does not hold.
\item Finally, a \textbf{r}un\textbf{t}ime \textbf{e}xception \code{rte} is thrown if any of the other side conditions does not hold. 
  We omit the rules for generating these exceptions, since they are heavy and do not add anything to our presentation. 
  Furthermore, by using a very standard type system, such exceptions can be easily prevented at compile time.
\end{itemize}

Notice that, when an exception appears on top of the stack, the execution halts as there are no rules to conclude a transition in such a case. 
Furthermore, the exception label $l$ has no meaning in the semantics, and the different labels are used only to distinguish between different situations that raise an exception. 
We could have added a command $\code{catch}$ to handle exceptions according to their label (like in many programming languages). 
However, since the purpose of this paper is to develop techniques that prevent out-of-gas exceptions through static checks, for the sake of simplicity we preferred to omit such a command.

%

\begin{figure}[t]
\begin{minipage}{0.5\textwidth}
\center
\begin{semantics}
  \RULE[Gen][ts2_trans_genesis]
    { \CONF{DC, \ENV{ST}^{\EMPTYSET}} \trans_{DC} \ENV{ST} }
    { \CONF{\code{$DC$ $T$}, \ENV{ST}^{\EMPTYSET}} \trans_B \CONF{T, \ENV{ST}} }
\end{semantics}
\end{minipage}
\begin{minipage}{0.5\textwidth}
\center
\begin{semantics}
  \RULE[Rev][ts2_trans_revel]
    { }
    { \CONF{\epsilon, \ENV{ST}} \trans_B \ENV{ST} }
\end{semantics}
\end{minipage}

\vspace*{.5cm}

\begin{semantics}
  \RULE[Tx$_1$][ts2_trans_trans1]
    { 
    \begin{array}{l}
  \ENV{S}(A) = \ENV{P}^A
  \qquad
  g \leq \ENV{P}^A(\code{balance}) - n
    \\
      \ENV{T} \vdash \CONF{\code{$X$.$f$($\VEC{v}$):$n$} :: \bot, \ENV{S}, (\code{this}, A), g } \trans\CONF{\bot, \ENV{S}', \ENV{V}, g'} 
       \\
    \ENV{S}'' = \ENV{S}'\EXTEND{A}{ \ENV{P}^A [\code{balance -= } (g - g')] }       
    \end{array}
    }
    { \CONF{\TRANSACT{A}{X}{f}{\VEC{v}}{(n,g)}, T, \ENV{ST}} \trans_B \CONF{T, \ENV{S}'', \ENV{T}} }

  \RULE[Tx$_2$][ts2_trans_trans2]
    { 
    \begin{array}{l} 
  \ENV{S}(A) = \ENV{P}^A
  \qquad
  g \leq \ENV{P}^A(\code{balance}) - n
    \\
      \ENV{T} \vdash \CONF{\code{$X$.$f$($\VEC{v}$):$n$} :: \bot, \ENV{S}, (\code{this}, A), g } \trans \CONF{\EXC{l} :: Q, \ENV{S}', \ENV{V}, g'} 
      \\
      \ENV{S}'' = \ENV{S}\EXTEND{A}{ \ENV{P}^A [\code{balance -= } (g - g')] } 
    \end{array}
    }
    { \CONF{\TRANSACT{A}{X}{f}{\VEC{v}}{(n,g)}, T, \ENV{ST}} \trans_B \CONF{T, \ENV{S}'', \ENV{T}} }

\end{semantics}
\caption{Transition rules for blockchains.}
\vspace*{-.4cm}
\label{fig:tinysolgas_semantics_blockchains}
\end{figure}

The semantics for blockchains is given in Figure~\ref{fig:tinysolgas_semantics_blockchains}
and contains rules \nameref{ts2_trans_genesis} and \nameref{ts2_trans_revel}, to activate and reveal the outcome of a sequence of transactions. Notice that the first rule exploits the semantics for declarations (written $\trans_{DC}$) taken from \cite{AGL24} for turning a sequence of declarations $DC$ into a proper environment; its rules are in Appendix~\ref{app:omitted_semantics_rules}.
The key rules for transactions are \nameref{ts2_trans_trans1} and \nameref{ts2_trans_trans2} for describing the execution of a transaction that either succeeds or fails.
Rule \nameref{ts2_trans_trans1} describes the situation where the transaction succeeds.
Having a gas limit $g$, the side condition requires that $g$ must not be greater than the available \code{balance} of the account $A$ initiating the transaction, \emph{after} any funds $n$ transferred along with the call have been deducted from the \code{balance} (since that could otherwise allow the \code{balance} of $A$ to become negative).
After the transaction succeeds, some amount of gas will have been consumed.
The remaining gas $g'$ is then used to calculate how much gas was used, and then this too is subtracted from the \code{balance} of $A$.
Rule \nameref{ts2_trans_trans2} describes the situation where the transaction fails with an exception $\EXC{l}$.
Here, the gas difference $g - g'$ is \emph{still} subtracted from the \code{balance} of $A$, but otherwise the state $\ENV{S}$ is unaltered.
Thus, all other effects of the transaction that raised the exception are rolled back, and the execution can continue with the next transaction.


\section{A Type System for Termination}
The gas mechanism ensures that all transactions will eventually terminate, there\-by preventing denial-of-service attacks resulting from diverging method calls.
However, it does not prevent potentially diverging contracts from being added to the blockchain.
This presents a problem to the users of the smart contracts: how to ensure that they do not inadvertently create a transaction invoking a diverging contract, since that would consume all available gas supplied to the transaction.
Furthermore, even if a method call does not actually diverge, it may still require many execution steps, and the user must therefore be sure to supply sufficient gas for the transaction to allow the execution to finish.
In the present section, we shall describe a type system for statically checking termination of transactions, whilst also providing upper (and lower) bounds on the number of execution steps.

\subsection{The Type System}\label{sec:types}
We wish to create a type system that will allow us to derive an upper bound on the number of execution steps required to run a statement $S$, since that is proportional to the required amount of gas.
Hence, type judgments for statements will be of the form $\Gamma \vdash S : n$ where $n$ is an integer denoting the maximum number of steps.
This number may be affected by the (integer) values of the loop condition expressions in $S$; thus we need the integer type (for both variables and expressions) to provide explicit upper and lower bounds on those values.
We write this as $\TINT^u_\ell$, where $u$ (resp.\@ $\ell$) is the upper (resp.\@ lower) bound (both included).

The $\TINT^u_\ell$ type is quite limiting; for instance, it will disallow some common operations such as increments/decrements of a variable (e.g., \code{$x$ := $x$+1}).
Therefore, we shall also include an ordinary (unbounded) integer type $\TINT$ for integer-valued variables and fields that do not in any way contribute to the expressions that guard \code{for} loops. 
To allow variables of both types to appear in operations, we shall also introduce a subtyping relation $\SUBS$ to allow bounded integer types to be coerced up to the unbounded type (see the rules in Appendix~\ref{app:subtyping}).

We also need the type of methods to have an upper bound on the number of steps required to execute the method body.
This type is written $\TPROC{\VEC{B}}^u_\ell : n$, to be reads as: assuming that the formal parameters are of (base) types $\VEC{B}$ and that the transferred amount of currency is in the interval $[\ell .. u]$, then the method body requires at most $n$ steps to execute.
The bounds on the \code{value} variable are necessary, since the variable might be used in a loop expression\footnote{%
We could also introduce an `unbounded' method type (similarly to what we did for $\TINT$ and $\TINT^u_\ell$), but we shall omit this to avoid complicating the matter further.
The `bounded' method type also has the added benefit of allowing a programmer to specify a minimum amount of currency that must be provided to call a method.}.  
Thus, we also handle divergence resulting from recursive method calls, since any calls within the method body would need to have a value strictly lower than $n$ (as calling the method itself also requires one step).

Finally, we need to give types to contract names, which must provide information on the signatures of fields and methods implemented in the contract.
Here, we shall employ a method similar to the one used in \cite{AGL24} and assume a set of \emph{type names} (or `interface names') $\TNAMES$, ranged over by $I$.
We then let type environments $\Gamma$ be partial functions from names to types, but also include $\Gamma$ in the language of types and use these as the types of interface names.
The idea is that, if $X$ is a contract name of type $I$, then $\Gamma(X) = I$ and $\Gamma(I) = \Gamma_I$, where $\Gamma_I$ then contains the signatures of the fields and methods of contract $X$.
We shall assume that these interfaces are declared besides the contract declarations, e.g.\@ using an interface definition language similar to the one described in \cite{AGL24}, and we refer the reader to that work for further details.

\begin{definition}[Language of types]
We use the following language of types, where $I \in \TNAMES$ is a \emph{type name} (or `interface name') and $\NAMES = \ANAMES \UNION \FNAMES \UNION \VNAMES \UNION \MNAMES \UNION \TNAMES$:
\begin{center}\normalfont
\begin{syntax}[h]
  B                \IS \TBOOL \OR \TINT \OR \TINT^u_\ell \OR I     \\
  T \in \TYPES     \IS B \OR \TPROC{\VEC{B}}^u_\ell : n \OR \Gamma \\
  \Gamma, \Delta   \IS \NAMES \PARTIAL \TYPES                  
\end{syntax}
\end{center}

\noindent We write $\VEC{T}$ for a tuple of types $(T_1, \ldots, T_n)$.
\end{definition}

As in \cite{AGL24}, we require that all interface declarations be \emph{well-formed} in the sense that they must at least contain a declaration for the mandatory members, i.e.\@ the \code{balance} field (type $\TINT$) and the \code{send()} method (type $\TPROC{}^{\code{INT\_MAX}}_0 : 1$).
This ensures that we can define a minimal interface declaration called $\ITOP$, containing just the signatures of \code{balance} and \code{send()}, such that every well-formed interface declaration is a specialisation of $\ITOP$; this is ensured by the subtyping relation and it is necessary to allow us to give a type to the `magic' variable \code{sender}, which is available within the body of any method.

We shall also use a \emph{typed} syntax of \TINYSOL{}, where local variables are now declared as $\code{$B$ $x$ := $e$}$ and $B$ is the type of the value of the expression $e$.
Furthermore, we assume that the type information is also stored in $\ENV{V}$, along with the actual value of each variable.
We write this as a triple $(x,v,B)$ but otherwise ignore the type information in the semantics.
Interface types could also be explicitly given in the contract definitions, i.e.\@ as \code{contract $X$ : $I$ \{ $DF$ $DM$ \}}, but to simplify the presentation we shall here merely assume that interface definitions for all contracts exist and are added to any type environment $\Gamma$ that we shall consider in the following.

We can now give the type rules, starting with the rules for environment agreement, given in Figure~\ref{fig:typerules_env}.
The type judgments here ascertain that each contract, field, variable and method indeed has a type in the type environment, that these types are compatible with the values assigned to them, and, in the case of methods, with the number of steps inferred for the method body $S$.
Note also that we write the type environment as split into two parts: the first part, $\Gamma$, contains the types for contract addresses; the second part, $\Delta$, records the types of local variables, including the `magic' variables \code{this}, \code{sender} and \code{value}.
Also, unless otherwise noted, we shall assume in the rules that these reserved names are contained in the respective sets of field and variables names, i.e.\@ $\code{balance} \in \FNAMES$ and $\code{this}, \code{sender}, \code{value} \in \VNAMES$. 

The type rules for declarations are very similar to those for environments (environments are just an alternative representation of declarations); thus, we omit them from the body of the paper and report them in Appendix~\ref{app:omitted_type_rules}.

\begin{figure}[t]
\centering
\normalfont
\begin{semantics}
  \RULE[t-env-t][ts2_type_termination_envt]({ \Delta = \code{this} : \Gamma(X) })
    { \Gamma, \Delta \vdash \ENV{M} \AND \Gamma \vdash \ENV{T} }
    { \Gamma \vdash (X, \ENV{M}) , \ENV{T} }

  \RULE[t-env-m][ts2_type_termination_envm]
  ({
    \begin{array}{l}
      \Delta(\code{this}) = I                \\
      \Gamma(I)(f) = \TPROC{\VEC{B}}^u_\ell : n \\
      \Delta' = \Delta, \VEC{x} : \VEC{B}, \code{value} : \TINT^u_\ell, \code{sender} : \ITOP 
    \end{array}
  })
    { 
      \Gamma, \Delta' \vdash S : n \AND  
      \Gamma, \Delta  \vdash \ENV{M} 
    }
    { \Gamma, \Delta \vdash (f, (\VEC{x}, S)),\ENV{M} }

  \RULE[t-env-s][ts2_type_termination_envs]({ \Delta = \code{this} : \Gamma(X) })
    { \Gamma, \Delta \vdash \ENV{F} \AND \Gamma \vdash \ENV{S} }
    { \Gamma \vdash (X, \ENV{F}),\ENV{S} }

  \RULE[t-env-f][ts2_type_termination_envf]
  ({
    \begin{array}{l}
      \Delta(\code{this}) = I \\
      \Gamma(I)(p) = B
    \end{array}
  })
    { 
      \Gamma, \Delta \vdash v : B \AND 
      \Gamma, \Delta \vdash \ENV{F} 
    }
    { \Gamma, \Delta \vdash (p, v),\ENV{F} }

  \RULE[t-env-v][ts2_type_termination_envv]({ \Delta(x) = B })
    { 
      \Gamma, \Delta \vdash v : B \AND 
      \Gamma, \Delta \vdash \ENV{V} 
    }
    { \Gamma, \Delta \vdash (x, v, B),\ENV{V} }
\end{semantics}
\caption{Type rules for environment agreement.}
\label{fig:typerules_env}
\end{figure}

Next, we have the rules for typing expressions, which are given in Figure~\ref{fig:typerules_expressions}.
Type judgments are here of the form $\Gamma, \Delta \vdash e : B$.
Note that in rule \nameref{ts2_type_termination_val}, when $v$ is an integer value, we set both the upper and lower bounds in the type to be exactly the value $v$. To widen the type of the value, e.g., when typing an assignment (such as \code{$\TINT^5_2$ $x$ := $3$}), we will use the subtyping relation.
In the rule \nameref{ts2_type_termination_op}, we assume the existence of type rules with judgments of the form $\vdash \op : \VEC{B} \to B$ for each operation $\op$.
In particular, we assume that we can derive conclusions about the upper and lower bounds on integer-valued operations from the bounds on the expression operands.
This is straightforward for standard integer operations. 
As an example, consider the following rules:
\begin{equation*}
  \vdash + : (\TINT^{u_1}_{\ell_1}, \TINT^{u_2}_{\ell_2}) \to \TINT^{u_1 + u_2}_{\ell_1 + \ell_2}
  \qquad
  \vdash - : (\TINT^{u_1}_{\ell_1}, \TINT^{u_2}_{\ell_2}) \to \TINT^{u_1 - \ell_2}_{\ell_1 - u_2}
\end{equation*}

\noindent For example, assuming that $\Gamma, \Delta \vdash x : \TINT^5_2$, the type of $10 - x$ is $\TINT^8_5$.

\begin{figure}[t]
\centering
\normalfont
\begin{semantics}
  \RULE[t-var][ts2_type_termination_var]({ \Delta(x) = B })
    { }
    { \Gamma, \Delta \vdash x : B }
\end{semantics}
\begin{semantics}
  \RULE[t-field][ts2_type_termination_field]({ \Gamma(I)(p) = B })
    { \Gamma, \Delta \vdash e : I }
    { \Gamma, \Delta \vdash e.p : B }
\end{semantics}
\vskip1em
\begin{semantics}
  \RULE[t-val][ts2_type_termination_val]({
  B = %
  \begin{cases}
    \Gamma(v) & \text{if $v \in \ANAMES$}    \\
    \TBOOL    & \text{if $v \in \mathbb{B}$} \\
    \TINT^v_v & \text{if $v \in \mathbb{Z}$}
  \end{cases}
  }) 
    { }
    { \Gamma, \Delta \vdash v : B }

  \RULE[t-op][ts2_type_termination_op]
    { \Gamma, \Delta \vdash \VEC{e} : \VEC{B} \AND \vdash \op : \VEC{B} \to B }
    { \Gamma, \Delta \vdash \op(\VEC{e}) : B }
\end{semantics}
\caption{Type rules for expressions}
\label{fig:typerules_expressions}
\end{figure}

Finally, we have the type rules for statements, given in Figure~\ref{fig:typerules_statements}.
Here, type judgments are of the form $\Gamma, \Delta \vdash S : n$, with $n$ denoting the maximum number of steps required to execute $S$. This is mostly calculated in an obvious way, by only using the upper bounds $u$; however, the lower bounds $\ell$ are implicitly used when calculating the types of expressions (as discussed above). 
For example, we calculate the upper bound on the number of steps of \code{for $e$ do $S$} by first calculating the upper bound $u$ on the value of $e$ and the upper bound $n$ on the steps for $S$: 
If $u \geq 1$, we multiply $u$ and $n+1$ (the ‘+1' is the extra step needed for activating every iteration of the loop, cf.\@ \nameref{ts2_sss_fortrue}) and finally we sum 1 (the cost of the final activation of the loop; i.e., when the guard becomes < 1, see \nameref{ts2_sss_forfalse}). 
Otherwise the upper bound is $1$ (only the case with a guard < 1). 
The max is used for covering the two cases with one single rule.

Notice that \nameref{ts2_type_termination_decv} and \nameref{ts2_type_termination_call} both have $n+2$ in their conclusion, rather than $n+1$, because both constructs will push an extra symbol onto the stack (viz., $\DEL{x}$ and $\ENV{V}$, resp.), which will require an extra step in the semantics when their scope ends. 
A similar reason justifies the ‘+1' in rules \nameref{ts2_type_termination_seq}, \nameref{ts2_type_termination_if}, and \nameref{ts2_type_termination_loop}.

\begin{figure}[t]
\centering
\normalfont
\begin{minipage}{.35\textwidth}
\begin{semantics}
  \RULE[t-skip][ts2_type_termination_skip]
    { }
    { \Gamma, \Delta \vdash \code{skip} : 1 }

  \RULE[t-throw][ts2_type_termination_throw]
    { }
    { \Gamma, \Delta \vdash \code{throw} : 1 }
\end{semantics}
\end{minipage}
\begin{minipage}{.5\textwidth}
\begin{semantics}
  \RULE[t-assv][ts2_type_termination_assv]
    { 
      \Gamma, \Delta \vdash x : B \AND 
      \Gamma, \Delta \vdash e : B 
    }
    { \Gamma, \Delta \vdash \code{$x$ := $e$} : 1 }

  \RULE[t-assf][ts2_type_termination_assf]
    { 
      \Gamma, \Delta \vdash e_1.p : B \AND 
      \Gamma, \Delta \vdash e_2 : B 
    }
    { \Gamma, \Delta \vdash \code{$e_1$.$p$ := $e_2$} : 1 }
\end{semantics}
\end{minipage}

\vspace*{.4cm}

\begin{semantics}
  \RULE[t-decv][ts2_type_termination_decv]
    { 
      \Gamma, \Delta \vdash e : B \AND 
      \Gamma, (\Delta, x : B) \vdash S : n 
    }
    { \Gamma, \Delta \vdash \code{$B$ $x$ := $e$ in $S$} : n + 2 }

  \RULE[t-seq][ts2_type_termination_seq]
    { 
      \Gamma, \Delta \vdash S_1 : n_1 \AND 
      \Gamma, \Delta \vdash S_2 : n_2 
    }
    { \Gamma, \Delta \vdash \code{$S_1$;$S_2$} : n_1 + n_2 + 1}

  \RULE[t-if][ts2_type_termination_if]
    { 
      \Gamma, \Delta \vdash e : \TBOOL \AND 
      \Gamma, \Delta \vdash S_{\TRUE} : n_1 \AND 
      \Gamma, \Delta \vdash S_{\FALSE} : n_2 
    }
    { \Gamma, \Delta \vdash \code{if $e$ then $S_{\TRUE}$ else $S_{\FALSE}$} : \max(n_1, n_2) + 1 }

  \RULE[t-loop][ts2_type_termination_loop]
    { 
      \Gamma, \Delta \vdash e : \TINT^u_\ell \AND 
      \Gamma, \Delta \vdash S : n 
    }
    { \Gamma, \Delta \vdash \code{for $e$ do $S$} : \max(1, u (n + 1)+1) }

  \RULE[t-call][ts2_type_termination_call]
  ({
    \Gamma(I)(f) = \TPROC{\VEC{B}}^u_\ell : n
  })
    { 
      \Gamma, \Delta \vdash e_1 : I \AND
      \Gamma, \Delta \vdash \VEC{e} : \VEC{B} \AND 
      \Gamma, \Delta \vdash e_2 : \TINT^u_\ell
    }
    { \Gamma, \Delta \vdash \code{$e_1$.$f$($\VEC{e}$):$e_2$} : n + 2 }
\end{semantics}
\caption{Type rules for statements.}
\label{fig:typerules_statements}
\end{figure}

We illustrate the use of the type system with a small example. 
Consider the statement \code{for $x$ do call $y$.$f$($x$):1}, and assume it is to be executed in an environment where the local variables $x$ and $y$ are declared so that $\ENV{V}(x) = 3$ and $\ENV{V}(y) = A$, for some contract $A$ of interface type $I$ containing a method $f$.
Thus, this statement will call the method $A$.$f$ three times, each time transferring $1$ unit of currency along with the call, and with the number $3$ as the actual parameter (since the value of $x$ does not change, cf.\@ rule \nameref{ts2_sss_fortrue}).

To type the example, assume further that $\Delta(x) = \TINT^5_1$, $\Delta(y) = I$, and $\Gamma(I)(f) = \TPROC{\TINT}^{10}_1 : 20$.
Thus, the \code{for}-loop can run at most 5 times; moreover, the method call must at least receive 1 unit of currency to execute and will finish in at most 20 steps.
The most interesting part of the typing derivation is then as follows:
\begin{equation*}
\dfrac
{
  \dfrac
  { \Delta(x) = \TINT^5_1 }
  { \Gamma, \Delta \vdash x : \TINT^5_1 } 
  \qquad
  \dfrac
  { 
    \begin{array}{c}
      \tabularnewline
      \ldots 
    \end{array}
    \quad
    \dfrac
    { 
      \dfrac
      { \Delta(x) = \TINT^5_1 }
      { \Gamma, \Delta \vdash x:\TINT^5_1 } 
      \qquad 
      \begin{array}{c}
        \tabularnewline
        \Gamma \vdash \TINT^5_1 \SUBS \TINT 
      \end{array} 
    }
    { \Gamma, \Delta \vdash x:\TINT } 
    \qquad 
    \dfrac
    { \ldots }
    { \Gamma, \Delta \vdash 1:\TINT^{10}_1 }
  }
  { \Gamma, \Delta \vdash \code{call $y$.$f$($x$):1} : 20+2 }
} 
{ \Gamma, \Delta \vdash \code{for $x$ do call $y$.$f$($x$):1} : \max(1,5((20+2)+1)+1) }
\end{equation*}

For space reasons, we have omitted the derivation of $\Gamma, \Delta \vdash y : I$, which is concluded by \nameref{ts2_type_termination_var} and of $\Gamma, \Delta \vdash 1 : \TINT^{10}_1$, which is inferred using the subtyping rules (see Appendix~\ref{app:subtyping}).
The latter involves concluding $\Gamma, \Delta \vdash 1 : \TINT^1_1$ by \nameref{ts2_type_termination_val}, and then $\Gamma \vdash \TINT^1_1 \SUBS \TINT^{10}_1$ can be proved by \nameref{ts2_type_termination_subs_int1}, since $1 \leq 10$ for the upper bound, and $1 \leq 1$ for the lower bound.
Likewise, the type of $x$, which is $\TINT^5_1$, is coerced up to $\TINT$ by the subtyping rule \nameref{ts2_type_termination_subs_int2}, when $x$ is used as a parameter for the method call, as required by the type of $f$.
Thus we conclude, by \nameref{ts2_type_termination_call}, that the method call itself will require at most $20+2 = 22$ steps, and, finally, by \nameref{ts2_type_termination_loop}, that the entire \code{for}-loop will require at most $\max(1,5((20+2)+1)+1) = 116$ steps. 

\subsection{Properties of the type system}\label{sec:prop}
Our type system ensures that a well-typed statement $S$, executing in a configuration with well-typed environments $\ENV{TSV}$, will terminate after at most $n$ steps.
To prove this property, we wish to show a statement saying that well-typedness is preserved by transitions (subject reduction), and furthermore that the number $n$ decreases after every transition step.
However, our semantics does not describe the execution of single statements, but rather of \emph{stacks}, so we shall also need the notion of well-typed stacks and configurations.

The definition of such notions is complicated by the form of our semantic rules (Figure~\ref{fig:tinysolgas_semantics_statements1_sss}):
Atomic statements are taken from the top of the stack and evaluated directly in the conclusion of a rule, whereas composite statements are rewritten and then pushed back onto the stack.
This feature makes exception handling easy, since we do not need extra rules for propagating an exception down through a derivation tree, but it does pose a problem w.r.t.\@ the subject reduction theorem, in particular in dealing with the rules \nameref{ts2_sss_decv} and \nameref{ts2_sss_call}.
Both of these rules alter the local variable environment $\ENV{V}$, either by adding a new binding to it (\nameref{ts2_sss_decv}) or by replacing it with a new environment (\nameref{ts2_sss_call}), but the scope of these changes is not confined to the premise of those rules, unlike in the corresponding type rules (\nameref{ts2_type_termination_decv} and \nameref{ts2_type_termination_call}).
Instead, the syntax of stacks (Definition~\ref{def:stack_syntax}) introduces some extra symbols ($\DEL{x}$ and $\ENV{V}$) to mark the end of these scopes, which are then pushed onto the stack \emph{before} pushing the statement $S$ to be executed within the scope of these new bindings.
This, in turn, means that new local variables can appear as free in the statement residing on the top of a stack \emph{after} a transition step, if the previously executed statement was a declaration or a call. 
Thus, to be able to type the stack after a transition step, we need a way to  update the $\Delta$ part of the type environment with information on these new bindings. 
This is done as follows: 

\begin{definition}[Extraction function]\label{def:extraction_function}
The type extraction function from stacks to type environments is given by the following clauses:
\normalfont
\begin{align}
  \label{extract1} \EXTRACT{(x,v,B), \ENV{V} :: Q}                  & = x:B ,\ \EXTRACT{\ENV{V} :: Q} \\
  \label{extract2} \EXTRACT{\ENV{V}^\emptyset :: Q}                 & = \epsilon                      \\
  \label{extract3} \EXTRACT{\code{$B$ $x$ := $e$ in $S$} :: Q}      & = x:B,\ \Delta                  \\
  \label{extract4} \EXTRACT[\Delta, x:B]{\DEL{x} :: Q}              & = \Delta                        \\
  \label{extract5} \EXTRACT{\code{$e_1$.$f$($\VEC{e}$):$e_2$} :: Q} & = \code{this}:I,\ \code{sender}:\ITOP, \ \code{value}:\TINT^u_\ell,\ \VEC{x}:\VEC{B}           \\
  \notag                                                            & \text{where $\Gamma, \Delta \vdash e_1 : I$ and $\Gamma(I)(f) = \TPROC{\VEC{B}}^u_\ell : n$.}  \\
  \label{extract6} \EXTRACT{Q}                                      & = \Delta \qquad \text{in all other cases}
\end{align}
\end{definition}

The function ensures that the type environment for local variables $\Delta$ remains in agreement with the currently active variable environment $\ENV{V}$ after a transition.
This is also the reason why we needed to store the type information of locally declared variables in $\ENV{V}$, and why we use the convention of writing the type environment as split into two segments (viz., $\Gamma$ and $\Delta$), since the $\Gamma$ segment remains unaltered.

\begin{figure}[t]
\begin{minipage}{.4\textwidth}
\begin{semantics}
  \RULE[t-bot][ts2_type_termination_bot]
    { }
    { \Gamma, \Delta \vdash \bot : 0 }

  \RULE[t-exc][ts2_type_termination_exc]
    { }
    { \Gamma, \Delta \vdash \EXC{l} :: Q : 0  }

  \RULE[t-del][ts2_type_termination_del]
    { \Gamma, \EXTRACT{\DEL{x} :: Q} \vdash Q : n }
    { \Gamma, \Delta \vdash \DEL{x} :: Q : n }
\end{semantics}
\end{minipage}
\begin{minipage}{.5\textwidth}
\begin{semantics}
  \RULE[t-stm][ts2_type_termination_stm]
    { \Gamma, \Delta \vdash S : n_1 \AND \Gamma, \Delta \vdash Q : n_2 }
    { \Gamma, \Delta \vdash S :: Q : n_1 + n_2 }

  \RULE[t-ctx][ts2_type_termination_ctx]
    { \Gamma, \EXTRACT{\ENV{V} :: Q} \vdash Q : n }
    { \Gamma, \Delta \vdash \ENV{V} :: Q : n }

  \RULE[t-cfg][ts2_type_termination_cfg](n < g)
    { \Gamma, \Delta \vdash Q : n \AND \Gamma \vdash \ENV{S} \AND \Gamma, \Delta \vdash \ENV{V} }
    { \Gamma, \Delta \vdash \CONF{Q, \ENV{SV}, g} }
\end{semantics}
\end{minipage}
\caption{Type rules for stacks and configurations.}
\label{fig:typerules_stacks}
\end{figure}

Now, we can define the notions of well-typed stacks and configurations, which result from the rules given in Figure~\ref{fig:typerules_stacks}.
Note that we use the extraction function in the premises of rules \nameref{ts2_type_termination_ctx} and \nameref{ts2_type_termination_del} to handle the two special cases when the top of the stack is an end-of-scope symbol.
Moreover, rule \nameref{ts2_type_termination_cfg} states that a configuration $\CONF{Q, \ENV{SV}, g}$ is well-typed (for some type environment $\Gamma, \Delta$) if the environments agree, and the stack is typable as terminating in $n$ steps, \emph{and} the supplied gas value $g$ is greater than $n$; this is precisely what we want our type system to ensure.

We can now state our main theorem, whose proof is in Appendix~\ref{app:proof_subject_reduction}.

\begin{theorem}[Subject reduction]\label{thm:subject_reduction}
Let $\Gamma \vdash \ENV{T}$ and $\Gamma, \Delta \vdash \CONF{Q, \ENV{SV}, g}$. If $\ENV{T} \vdash \CONF{Q, \ENV{SV}, g} \trans \CONF{Q', \ENV{SV}', g'}$ then $\Gamma, \EXTRACT{Q} \vdash \CONF{Q', \ENV{SV}', g'}$.
\end{theorem}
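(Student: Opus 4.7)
The plan is to proceed by induction on the derivation of the transition $\ENV{T} \vdash \CONF{Q, \ENV{SV}, g} \trans \CONF{Q', \ENV{SV}', g'}$, performing case analysis on the last semantic rule applied. In every case, I would first invert the hypothesis $\Gamma, \Delta \vdash \CONF{Q, \ENV{SV}, g}$ via rule \nameref{ts2_type_termination_cfg}, obtaining $\Gamma, \Delta \vdash Q : n$ with $n < g$, together with the environment judgments $\Gamma \vdash \ENV{S}$ and $\Gamma, \Delta \vdash \ENV{V}$. A second inversion, through \nameref{ts2_type_termination_stm}, splits the typing of $Q$ into $\Gamma, \Delta \vdash S : n_1$ for the top of the stack and $\Gamma, \Delta \vdash Q_{\mathit{rest}} : n_2$ for the tail, with $n = n_1 + n_2$. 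The goal in each case is then to rebuild a typing derivation for the new stack under $\Gamma, \EXTRACT{Q}$, to verify that the updated $\ENV{SV}'$ remain well-typed, and to check the strict gas inequality demanded by \nameref{ts2_type_termination_cfg}.

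For the ``scope-preserving'' rules --- \nameref{ts2_sss_skip}, \nameref{ts2_sss_seq}, \nameref{ts2_sss_if}, \nameref{ts2_sss_fortrue}, \nameref{ts2_sss_forfalse}, \nameref{ts2_sss_assv}, \nameref{ts2_sss_assf}, \nameref{ts2_sss_throw}, and \nameref{ts2_sss_oog} --- the extraction function falls into its default clause~(6), so $\EXTRACT{Q} = \Delta$ and the local type environment is unchanged. The step-count arithmetic then matches the semantics by direct inspection: the ``$+1$'' bookkeeping in each statement type rule (cf.\ \nameref{ts2_type_termination_seq}, \nameref{ts2_type_termination_if}, \nameref{ts2_type_termination_loop}) is exactly what is needed so that the new bound $n'$ drops by at least as much as $g$ does, preserving the strict inequality $n' < g'$. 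Assignments and calls additionally require a small expression-level observation, namely that $\trans_e$ preserves types, so that the values written into $\ENV{V}$ or into the field environment inside $\ENV{S}$ agree with the declared types.

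The substantive cases are the scope-changing rules. For \nameref{ts2_sss_decv}, inverting \nameref{ts2_type_termination_decv} gives $\Gamma, (\Delta, x : B) \vdash S : n_1$; clause~(3) of the extraction function yields precisely $\EXTRACT{Q} = x : B,\, \Delta$, so $S$ is typable under the new environment, while the tail $\DEL{x} :: Q_{\mathit{rest}}$ is typed by \nameref{ts2_type_termination_del}, whose internal re-invocation of extraction (clause~(4)) restores $\Delta$ for the remainder. The ``$+2$'' in \nameref{ts2_type_termination_decv} exactly covers the future step required to pop the $\DEL{x}$ marker. An analogous argument handles \nameref{ts2_sss_call} using clauses~(5) and~(1)--(2), now drawing on the global well-typedness $\Gamma \vdash \ENV{T}$, which via \nameref{ts2_type_termination_envm} supplies a typing derivation for the callee's body $S$ under exactly the environment that the extraction function produces. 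The rules \nameref{ts2_sss_delv} and \nameref{ts2_sss_return} are dual and go through by the same pattern: they strip a $\DEL{x}$ or an $\ENV{V}$ marker, and clauses~(4) or~(1)--(2) project the intended $\Delta$ for the exposed suffix of the stack.

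I expect the main obstacle to lie in the \nameref{ts2_sss_call} case: one must align the interface type $\Gamma(Y) = I$ with the structure of $\ENV{S}(Y)$, show that the actual arguments $\VEC{v}$ have types compatible (via the subtyping relation) with the formal parameter types $\VEC{B}$, and check that the symmetric balance updates preserve the $\TINT$ type of \code{balance}. On top of this, the strict inequality $n' < g'$ must be re-established, and here the ``$+2$'' overheads in \nameref{ts2_type_termination_decv} and \nameref{ts2_type_termination_call} play their role: they precisely account for both the gas consumed by the call/declaration step and the step that will later be needed to discard the scope marker, so that $n' < g'$ is maintained at every intermediate configuration.
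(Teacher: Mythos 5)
Your proposal follows essentially the same route as the paper's proof: a case analysis on the last transition rule applied, inversion of \nameref{ts2_type_termination_cfg} and \nameref{ts2_type_termination_stm}, use of the extraction-function clauses to re-type the new stack (with the well-typedness of $\ENV{T}$ supplying the callee body's typing in the \nameref{ts2_sss_call} case), auxiliary lemmas for expression safety and environment updates, and the arithmetic check $n' < g'$. The one detail to adjust is that the \nameref{ts2_sss_oog} case is not discharged by step-count arithmetic but is vacuous, since a configuration with $g = 0$ cannot satisfy the side condition $n < g$ of \nameref{ts2_type_termination_cfg}.
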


This theorem ensures not only that well-typedness of configurations is preserved, but also that the number of steps always remains lower than the provided gas value, by the side condition in rule \nameref{ts2_type_termination_cfg}.

\begin{corollary}
Let $\Gamma \vdash \ENV{T}$ and $\Gamma, \Delta \vdash \CONF{Q, \ENV{SV}, g}$. If $\ENV{T} \vdash \CONF{Q, \ENV{SV}, g} \trans \CONF{Q', \ENV{SV}', g'}$ then $Q' \neq \EXC{\code{oog}} :: Q''$.
\end{corollary}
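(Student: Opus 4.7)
The plan is to argue by contradiction, leveraging Theorem~\ref{thm:subject_reduction} together with the strict inequality $n < g$ built into rule \nameref{ts2_type_termination_cfg}. The key observation is that an \EXC{\code{oog}} on top of the stack is typed with zero remaining steps, while it can only be produced in a configuration with zero remaining gas, so the two conditions cannot coexist in a well-typed configuration.

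First, I would apply Theorem~\ref{thm:subject_reduction} to the hypotheses $\Gamma \vdash \ENV{T}$, $\Gamma, \Delta \vdash \CONF{Q, \ENV{SV}, g}$ and $\ENV{T} \vdash \CONF{Q, \ENV{SV}, g} \trans \CONF{Q', \ENV{SV}', g'}$ to obtain $\Gamma, \EXTRACT{Q} \vdash \CONF{Q', \ENV{SV}', g'}$. Since the only rule that concludes a configuration typing is \nameref{ts2_type_termination_cfg}, this gives some $n$ with $\Gamma, \EXTRACT{Q} \vdash Q' : n$, together with environment agreement, and crucially the side condition $n < g'$.

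Next, I would assume for contradiction that $Q' = \EXC{\code{oog}} :: Q''$. Inspecting the typing rules for stacks in Figure~\ref{fig:typerules_stacks}, the only rule whose conclusion matches a stack whose top is an exception is \nameref{ts2_type_termination_exc}, which forces $n = 0$. Combined with the inequality from \nameref{ts2_type_termination_cfg}, this yields $g' \geq 1$. On the other hand, inspection of the rules in Figure~\ref{fig:tinysolgas_semantics_statements1_sss} shows that \nameref{ts2_sss_oog} is the unique rule that can place an \EXC{\code{oog}} on top of the resulting stack (the other exception labels \code{pge}, \code{rte} and \code{neg} are produced by different rules or omitted ones), and this rule fires precisely when the input gas is $0$ and leaves the gas unchanged. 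Hence the reduction step forcing $Q' = \EXC{\code{oog}} :: Q''$ also forces $g' = 0$, contradicting $g' \geq 1$.

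I do not anticipate a genuine obstacle: the statement is essentially an immediate corollary, and the whole argument is the little observation that the typing discipline assigns \emph{zero} residual steps to any exception-headed stack while the semantics only generates \EXC{\code{oog}} with zero residual gas, so the invariant $n < g$ preserved by subject reduction rules out the bad shape. All the heavy lifting is absorbed into Theorem~\ref{thm:subject_reduction}.
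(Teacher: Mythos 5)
Your argument is sound in its core idea and reaches the right conclusion, but it takes a detour that the paper does not. The paper never needs to invoke Theorem~\ref{thm:subject_reduction} on the \emph{resulting} configuration to rule out $\EXC{\code{oog}}$; the relevant observation is already contained in the \nameref{ts2_sss_oog} case of the subject-reduction proof: a well-typed configuration satisfies $n < g$ with $n \geq 0$ (every stack types to a non-negative number), hence $g \geq 1$, so the side condition $g = 0$ of \nameref{ts2_sss_oog} can never be met and that rule is simply inapplicable from a well-typed configuration. In other words, the corollary follows from the hypothesis $\Gamma, \Delta \vdash \CONF{Q, \ENV{SV}, g}$ alone, without ever typing the reduct. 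Your version instead types the reduct via subject reduction, extracts $n' = 0$ from \nameref{ts2_type_termination_exc}, and contradicts $g' = 0$; this is logically fine but does more work than necessary, and it makes the proof depend on the full strength of Theorem~\ref{thm:subject_reduction} where only the side condition of \nameref{ts2_type_termination_cfg} is needed.

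There is also one claim you should be more careful about: that \nameref{ts2_sss_oog} is the \emph{unique} rule that can leave $\EXC{\code{oog}}$ on top of the resulting stack. Rules that pop the top element --- \nameref{ts2_sss_skip}, \nameref{ts2_sss_forfalse}, \nameref{ts2_sss_delv}, \nameref{ts2_sss_return}, and the two assignment rules --- expose whatever lies beneath, so a stack such as $\code{skip} :: \EXC{\code{oog}} :: \bot$ with $g = 2$ is well-typed (it types to $1 < 2$) yet steps by \nameref{ts2_sss_skip} to an $\EXC{\code{oog}}$-headed stack with $g' = 1$. Strictly speaking this is a counterexample to the corollary as literally stated, and it defeats your step relying on uniqueness as well; the statement implicitly assumes that exceptions do not occur buried inside $Q$, which is true of every stack reachable from a transaction, since an exception at the top of the stack halts execution immediately. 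Either restrict the quantification to such stacks or note explicitly that the uniqueness claim holds only for them. The paper's own one-line justification glosses over the same point, so this is a shared imprecision rather than a defect peculiar to your proof, but since your argument leans explicitly on the uniqueness claim you should state the caveat.
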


It is perhaps also worth emphasising that the type system also ensures termination.
This is another simple consequence of Theorem~\ref{thm:subject_reduction}, since the gas value decreases at every step, except when the transition is concluded with one of the rules \nameref{ts2_sss_seq}, \nameref{ts2_sss_delv}, \nameref{ts2_sss_return}, \nameref{ts2_sss_throw} and \nameref{ts2_sss_oog}.
Of these, the last two place an exception on the top of the stack, which has no further transitions, so the execution terminates.
Since all stacks have a finite depth, the first three rules can only be used finitely many times before another, gas-consuming statement is encountered, or we reach the bottom of the stack.
Thus, the number of steps $n$ must necessarily also always eventually decrease.

Lastly, Theorem~\ref{thm:subject_reduction} can be generalised to transactions and blockchains in an obvious way, since a transaction simply corresponds to a method call placed on an empty stack (see rule \nameref{ts2_trans_trans1}), from which a $\Delta$ is derivable using the extraction function from Definition~\ref{def:extraction_function}.
Given well-typed environments $\ENV{TS}$, the type system can thus be used to ensure that a transaction is supplied with sufficient gas to allow it to finish without invoking rule \nameref{ts2_trans_trans2}.

\subsection{Limitations of the type system}\label{sec:lim}
The type system ensures termination of well-typed configurations (and hence transactions) and absence of out-of-gas exceptions, but at the cost of some limitations.
Firstly, we had to abandon \code{while} loops, in favour of the (more controllable) \code{for} loops.
This does not seem too severe, since smart contracts are not intended to be used to create non-terminating programs.

More limiting, perhaps, is the fact that it also prohibits all forms of recursive calls, including mutual recursions.
This is a consequence of rule \nameref{ts2_type_termination_call}, which assigns a number of $n+2$ steps to a call to a method $f$, whose number itself is $n$.
By rule \nameref{ts2_type_termination_envm}, the declaration of $f$ is well-typed if the statement $S$ in the method body is typeable as terminating in $n$ steps.
As $n$ is strictly less than $n+2$, $S$ therefore cannot contain a recursive call to $f$, nor a call to any other method that itself calls $f$.
This limitation seems more severe, since it rules out even terminating recursive functions. 

Finally, the bounded integer type needed for typing the guards of \code{for} loops does not permit some common assignments where the assigned variable appears in the expression (e.g., \code{$x$ := $x$+1}).
This limitation is a consequence of rules \nameref{ts2_type_termination_assv}, \nameref{ts2_type_termination_decv} and \nameref{ts2_type_termination_assf}, which require the assigned variable (resp.\@ field) to have the same base type $B$ as the expression $e$.
Suppose for example that $x$ has type $\TINT^u_\ell$, and we have the assignment \code{$x$ := $x$+1}.
The computed bounds for the expression (which we assume in the rule \nameref{ts2_type_termination_op}) must therefore be $\TINT^{u+1}_{\ell+1}$, which does not match the type of $x$, nor can it be coerced down to match it via subtyping.
However, we remark that this limitation only affects those variables that appear in the guard of \code{for} loops, since all other ones can be typed as $\TINT$. 

\section{Conclusion and Future Work}\label{sec:concl}
In this paper, we have further extended the \TINYSOL{} model by equipping it with a small-step semantics, exceptions and a gas mechanism.
Using this refined smart-contract model language, we presented a first step towards the development of static analysis techniques for checking the desirable property that a transaction never runs out-of-gas during its execution. 
This is done by means of a type system whose main aim is to give an upper bound on the number of steps needed by a transaction to complete its task.

The present type system is, in particular, inspired by Type System 1 of \cite{PITERM}. That paper also presents two other type systems, which e.g.\@ allow for some limited recursion.
Thus, one avenue for future work is to extend the present type system in a similar manner to permit some recursive method calls.

Another avenue concerns an altogether different approach to type soundness.
The present work uses the usual, inductive (subject reduction) approach of \cite{wright_felleisen1994subject_reduction}.
However, one way to lift some of the restrictions on the loop construct would be to instead use a coinductive approach, sometimes known as \emph{semantic typing} \cite{ahmed2004semantic_types_phd,appel_mcallester2001semantic_types_pcc,caires2007}.
A key difference lies in the structure of the soundness proof, which in semantic typing allows `manual proofs' to be used to guarantee well-typedness of some `unsafe' pieces of code, which cannot otherwise be judged type-safe using the type rules.
This approach has in particular been explored by  \cite{ahmed2004semantic_types_phd,appel_mcallester2001semantic_types_pcc} in a setting of proof-carrying code \cite{necula/1997/popl/pcc}, the point being that such proofs can be automatically verified by a proof-checker.
A `manual proof' could for example be used to show that a particular usage of a \code{while} loop in a contract in fact \emph{will} terminate, and it could then be published on the blockchain along with the contract, thereby taking advantage of the immutable nature of blockchains.
This would allow users to still type check transactions, even if they invoke methods containing unsafe code such as \code{while} loops, because the type checker can take the accompanying `manual proof' into account.
We intend to explore this possibility in future work.

An orthogonal way to evolve our type system is to let it also provide a lower bound on the gas needed to execute a piece of code (not just an upper bound, as it is now). 
This can be useful for estimating how much currency one needs to have in the account to invoke a certain functionality.

Finally, having introduced the gas mechanism, we plan to implement real-life smart contracts in \TINYSOL{} and apply our type system to them to check its practical applicability to meaningful case studies.

\bibliographystyle{splncs04}
\bibliography{literature}

\appendix
\section{Omitted Semantic Rules}\label{app:omitted_semantics_rules}
The semantics of expressions is given in the following Figure~\ref{fig:tinysolgas_semantics_expressions}, where we assume an evaluation function $\op(\VEC{v}) \trans_{\op} v$ for every basic operation $\op$ of the language.

\begin{figure}\centering
\begin{semantics}
  \RULE[exp-val][ts2_expr_val]
    { }
    { \ENV{SV} \vdash v \trans_e v }

  \RULE[exp-var][ts2_expr_var]
    { x \in \DOM{\ENV{V}} \AND
      \ENV{V}(x) = v}
    { \ENV{SV} \vdash x \trans_e v }

  \RULE[exp-field][ts2_expr_field]
    { \ENV{SV} \vdash e \trans_e X \AND p \in \DOM{\ENV{S}(X)} \AND
      \ENV{S}(X)(p) = v}
    { \ENV{SV} \vdash e.p \trans_e v }

  \RULE[exp-op][ts2_expr_op]
    { \ENV{SV} \vdash \VEC{e} \trans_e \VEC{v} \AND \op(\VEC{v}) \trans_{\op} v}
    { \ENV{SV} \vdash \op(\VEC{e}) \trans_e v } 
\end{semantics}
\caption{Semantics of expressions.}
\label{fig:tinysolgas_semantics_expressions}
\end{figure}

The semantics of declarations turns a field/method/contract declaration in the appropriate environment, to be used in the semantics of transactions.
The rules are given in the following Figure~\ref{fig:tinysolgas_semantics_declarations}

\begin{figure}\centering
\begin{semantics}
  \RULE[dec-f$_1$]
    { }
    { \CONF{\epsilon, \ENV{F}} \trans_{DF} \ENV{F} }

  \RULE[dec-f$_2$]
    { \CONF{DF, \ENV{F}} \trans_{DF} \ENV{F}' }
    { \CONF{\code{field $p$ := $v$;$DF$}, \ENV{F}} \trans_{DF} (p, v), \ENV{F}' }

  \RULE[dec-m$_1$]
    { }
    { \CONF{\epsilon, \ENV{M}} \trans_{DM} \ENV{M} }

  \RULE[dec-m$_2$]
    { \CONF{DM, \ENV{M}} \trans_{DM} \ENV{M}' }
    { \CONF{\code{$f$($\VEC{x}$) \{ $S$ \} $DM$}, \ENV{M}} \trans_{DM} (f, (\VEC{x}, S)), \ENV{M}' }

  \RULE[dec-c$_1$][ts_dec_c1]
    { }
    { \CONF{\epsilon, \ENV{ST}} \trans_{DC} \ENV{ST} }

  \RULE[dec-c$_2$][ts_dec_c2]
    {
        \CONF{DF, \ENV{F}^{\EMPTYSET}} \trans_{DF} \ENV{F} \AND
        \CONF{DM, \ENV{M}^{\EMPTYSET}} \trans_{DM} \ENV{M} \AND
        \CONF{DC, \ENV{ST}} \trans_{DC} \ENV{ST}'  
    }
    { \CONF{\code{contract $X$ \{ $DF$ $DM$ \} $DC$}, \ENV{ST}} \trans_{DC} ((X, \ENV{F}), \ENV{S}'), ((X, \ENV{M}), \ENV{T}') }
\end{semantics}
\caption{Semantics of declarations.}
\label{fig:tinysolgas_semantics_declarations}
\end{figure}

\section{Subtyping rules}\label{app:subtyping}
The subtyping rules are only used for the right-hand side expression of assignments and for the arguments to method calls, to allow the value of the expression to be coerced up to a less specific type that can match the type of the variable or field.
Thus, we need the usual subsumption rule for expressions:
\begin{center}
\begin{semantics}
  \RULE[t-subs]
    { \Gamma, \Delta \vdash e : B_1 \AND \Gamma \vdash B_1 \SUBS B_2 }
    { \Gamma, \Delta \vdash e : B_2 }
\end{semantics}
\end{center}

Note that the subtyping relation is parametrised with a type environment $\Gamma$.
This is necessary for subtyping of interface names $I$.
The subtyping relation is then given by the reflexive and transitive closure of the rules in Figure~\ref{fig:subtyping}.

\begin{figure}
\begin{semantics}
  \RULE[subs-int$_1$][ts2_type_termination_subs_int1]
  ({
    \begin{array}{c}
      u_1 \leq u_2       \\
      \ell_1 \geq \ell_2
    \end{array}
  })
    { }
    { \Gamma \vdash \TINT^{u_1}_{\ell_1} \SUBS \TINT^{u_2}_{\ell_2} }

  \RULE[subs-int$_2$][ts2_type_termination_subs_int2]
    { }
    { \Gamma \vdash \TINT^u_\ell \SUBS \TINT }

  \RULE[subs-name][ts2_type_termination_subs_name]
    { \Gamma \vdash \Gamma(I^1) \SUBS \Gamma(I^2) }
    { \Gamma \vdash I^1 \SUBS I^2 }

  \RULE[subs-env][ts2_type_termination_subs_env]
    { \forall n \in \DOM{\Gamma_2} \SUCHTHAT \Gamma_1(n) \SUBS \Gamma_2(n) }
    { \Gamma \vdash \Gamma_1 \SUBS \Gamma_2 }

  \RULE[subs-proc][ts2_type_termination_subs_proc]
   ({
    \begin{array}{c}
      n_1 \leq n_2       \\
      u_1 \leq u_2       \\
      \ell_1 \geq \ell_2
    \end{array}
  })
    { \Gamma \vdash \VEC{B}_1 \SUBS \VEC{B}_2 }
    { \Gamma \vdash \TPROC{\VEC{B}_1}^{u_1}_{\ell_1}:n_1 \SUBS \TPROC{\VEC{B}_2}^{u_2}_{\ell_2}:n_2 }
\end{semantics}
\caption{Subtyping rules.}
\label{fig:subtyping}
\end{figure}

Note in particular the two rules for subtyping of the two integer types.
The first rule, \nameref{ts2_type_termination_subs_int1}, allows us to widen the bounds on a bounded integer.
This is necessary to make statements such as \code{$\TINT^8_2$ $x$ := 5} be well-typed, since the type of \code{5} is $\TINT^5_5$ (by rule \nameref{ts2_type_termination_val}).
The second rule, \nameref{ts2_type_termination_subs_int2},  allows any bounded integer type to be coerced up to an unbounded integer type.
This is necessary, if e.g.\@ both bounded and unbounded integers appear together in an expression, or if we have an assignment of a bounded integer type to an unbounded integer variable.

\section{Omitted type rules}\label{app:omitted_type_rules}
The type rules for declarations are given in the following Figure~\ref{fig:typerules_declarations}.
Note that they are almost identical to the type rules for environments (Figure~\ref{fig:typerules_env}).

\begin{figure}[h]
\centering
\normalfont
\begin{semantics}
  \RULE[t-dec-c][ts2_type_termination_decc]({ \Delta = \code{this} : \Gamma(X) })
    { 
      \Gamma, \Delta \vdash DF \AND
      \Gamma, \Delta \vdash DM  \AND
      \Gamma  \vdash DC 
    }
    { \Gamma \vdash \code{contract $X$ \{ $DF$ $DM$ \} $DC$} }

  \RULE[t-dec-f][ts2_type_termination_decf]
  ({
    \begin{array}{l}
      \Delta(\code{this}) = I \\
      \Gamma(I)(p) = B
    \end{array}
  })
    { 
      \Gamma, \Delta \vdash v : B \AND 
      \Gamma, \Delta \vdash DF }
    { \Gamma, \Delta \vdash \code{field $p$ := $v$; $DF$} }

  \RULE[t-dec-m][ts2_type_termination_decm]
  ({
    \begin{array}{l}
      \Delta(\code{this}) = I                \\
      \Gamma(I)(f) = \TPROC{\VEC{B}}^u_\ell : n \\
      \Delta' = \Delta, \VEC{x} : \VEC{B}, \code{value} : \TINT^u_\ell, \code{sender} : \ITOP
    \end{array} 
  })
    { \Gamma, \Delta' \vdash S : n \AND \Gamma, \Delta \vdash DM }
    { \Gamma, \Delta \vdash \code{$f(\VEC{x})$ \{ $S$ \} $DM$} }
\end{semantics}
\caption{Type rules for declarations.}
\label{fig:typerules_declarations}
\end{figure}

\section{Proof of Theorem~\ref{thm:subject_reduction} (subject reduction)}\label{app:proof_subject_reduction}
The proof relies on the following standard lemmas. 

\begin{lemma}[Strengthening]\label{lemma:envv_strengthening}
If $\Gamma, (\Delta, x:B) \vdash \ENV{V}$ and $x \notin \DOM{\ENV{V}}$, then $\Gamma, \Delta \vdash \ENV{V}$.  
\end{lemma}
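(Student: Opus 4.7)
The plan is to proceed by induction on the structure of $\ENV{V}$ (equivalently, on the derivation of the hypothesis $\Gamma, (\Delta, x:B) \vdash \ENV{V}$). The only type rule that applies to a non-empty variable environment is \nameref{ts2_type_termination_envv}, and the empty environment $\ENV{V}^{\emptyset}$ is trivially well-typed under any type environment, so the induction contains only one non-trivial case.

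For the inductive step, suppose $\ENV{V} = (y, v, B'), \ENV{V}'$. Inverting \nameref{ts2_type_termination_envv} on the hypothesis yields three facts: $(\Delta, x:B)(y) = B'$, $\Gamma, (\Delta, x:B) \vdash v : B'$, and $\Gamma, (\Delta, x:B) \vdash \ENV{V}'$. From $y \in \DOM{\ENV{V}}$ together with the side assumption $x \notin \DOM{\ENV{V}}$ we deduce $y \neq x$, and hence $\Delta(y) = (\Delta, x:B)(y) = B'$. Since $x \notin \DOM{\ENV{V}'}$ as well, the induction hypothesis applied to the third judgement delivers $\Gamma, \Delta \vdash \ENV{V}'$. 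Recombining these premises via \nameref{ts2_type_termination_envv} yields the desired $\Gamma, \Delta \vdash (y, v, B'), \ENV{V}'$, provided we have also re-typed $v$ under $\Delta$ alone.

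The only step that calls for a small auxiliary argument is precisely this re-typing. The key observation is that $v$ ranges over literal values (integers, booleans, or address names), whose typing derivations, by inspection of Figure~\ref{fig:typerules_expressions}, consist solely of \nameref{ts2_type_termination_val} possibly followed by applications of the subsumption rule of Appendix~\ref{app:subtyping}; none of these rules inspect the variable environment $\Delta$. Hence $\Gamma, (\Delta, x:B) \vdash v : B'$ immediately yields $\Gamma, \Delta \vdash v : B'$, closing the induction.

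I do not anticipate any real obstacle: the statement is a standard strengthening property, and the fact that entries in $\ENV{V}$ are literal values rather than arbitrary expressions reduces the variable-invariance sub-lemma to a trivial inspection of the value-typing rules. If the language were later enriched so that entries in $\ENV{V}$ could refer to other variables, a separate strengthening lemma for expressions would be needed, but for the present typed syntax the observation above suffices.
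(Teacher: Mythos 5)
Your proposal is correct and follows exactly the approach the paper itself takes (the paper's proof is the one-liner ``by induction on the structure of $\ENV{V}$, using rule \nameref{ts2_type_termination_envv}''); you merely spell out the details, including the correct observation that re-typing the stored value $v$ under the smaller $\Delta$ is unproblematic because value typing never consults the variable part of the type environment. No changes needed.
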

\begin{proof}
By induction on the structure of $\ENV{V}$, by using rule \nameref{ts2_type_termination_envv}.    
\end{proof}

The next two lemmas simply state that well-typedness of the variable environment (resp.\@ the state) is preserved, if we replace a value $v_1$ with another value $v_2$ of the same type as $v_1$.
Both are shown by induction on the structure of $\ENV{V}$ (resp.\@ $\ENV{S}$ and $\ENV{P}$), by using the rules \nameref{ts2_type_termination_envv} resp.\@ \nameref{ts2_type_termination_envs} and \nameref{ts2_type_termination_envf}.

\begin{lemma}[Update for variables]\label{lemma:update_variables}
Assume that $\Gamma, \Delta \vdash \ENV{V}$, $x \in \DOM{\ENV{V}}$, $\Delta(x) = B$ and $\Gamma, \Delta \vdash v : B$. Then $\Gamma, \Delta \vdash \ENV{V}\EXTEND{x}{v}$.
\end{lemma}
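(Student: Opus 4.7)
The plan is to proceed by straightforward structural induction on the variable environment $\ENV{V}$. Since the hypothesis $x \in \DOM{\ENV{V}}$ excludes the empty environment, only the inductive step needs to be treated; I would write $\ENV{V} = (y, v', B'), \ENV{V}'$ and observe that the derivation of $\Gamma, \Delta \vdash \ENV{V}$ must end in an application of \nameref{ts2_type_termination_envv}, which supplies the premises $\Delta(y) = B'$, $\Gamma, \Delta \vdash v' : B'$, and $\Gamma, \Delta \vdash \ENV{V}'$.

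The analysis then splits on whether $y = x$. If $y = x$, the side condition of \nameref{ts2_type_termination_envv} forces $B = B'$ (since $\Delta$ is a function), and the updated environment $\ENV{V}\EXTEND{x}{v}$ is $(x, v, B), \ENV{V}'$. Combining the hypothesis $\Gamma, \Delta \vdash v : B$ with the already-available $\Gamma, \Delta \vdash \ENV{V}'$ and applying \nameref{ts2_type_termination_envv} yields the desired judgment. If $y \neq x$, then necessarily $x \in \DOM{\ENV{V}'}$ and all other hypotheses transfer unchanged, so the induction hypothesis applies and gives $\Gamma, \Delta \vdash \ENV{V}'\EXTEND{x}{v}$; a further application of \nameref{ts2_type_termination_envv}, reusing $\Delta(y) = B'$ and $\Gamma, \Delta \vdash v' : B'$ for the leading triple, closes the case.

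There is no real obstacle here since the proof is essentially mechanical, but one bookkeeping point deserves a brief note: because $\ENV{V}$ is presented as an unordered sequence of pairs representing a partial function, the update $\ENV{V}\EXTEND{x}{v}$ must be interpreted consistently across the two subcases (either as in-place replacement of the binding of $x$, or as the removal of the old triple followed by prepending the new one). Both readings denote the same environment and lead to the same derivation via \nameref{ts2_type_termination_envv}, so I would state this convention once at the beginning and then proceed with the decomposition above.
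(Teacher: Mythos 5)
Your proof is correct and follows essentially the same route as the paper, which disposes of this lemma in one line as ``induction on the structure of $\ENV{V}$, using rule \nameref{ts2_type_termination_envv}''. Your write-up simply makes explicit the case split on whether the head binding is the one being updated, together with the observation that $\Delta$ being a function forces the types to agree, which is exactly the intended argument.
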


\begin{lemma}[Update for fields]\label{lemma:update_fields}
Assume that $\Gamma \vdash \ENV{S}$, $X \in \DOM{\ENV{S}}$,  $\Gamma(X)(p) = B$ and $\Gamma, \Delta \vdash v : B$.
Then $\Gamma \vdash \ENV{S}\EXTEND{X}{\ENV{P}\EXTEND{p}{v}}$.
\end{lemma}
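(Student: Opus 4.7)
The plan is to proceed by induction on the structure of $\ENV{S}$, viewed as an unordered sequence of pairs $(X', \ENV{F}')$, using the typing rule \nameref{ts2_type_termination_envs} to unpack and repack judgments at each step. Throughout, I read the lemma with the implicit convention $\ENV{P} = \ENV{S}(X)$, matching the usage in rule \nameref{ts2_sss_assf}. The base case of an empty $\ENV{S}$ is vacuous, since the hypothesis $X \in \DOM{\ENV{S}}$ would fail.

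For the inductive step, write $\ENV{S} = (X', \ENV{F}'), \ENV{S}''$. By inversion of \nameref{ts2_type_termination_envs} I obtain $\Gamma, (\code{this} : \Gamma(X')) \vdash \ENV{F}'$ and $\Gamma \vdash \ENV{S}''$. If $X' \neq X$, then the update only touches $\ENV{S}''$, so I apply the induction hypothesis to $\ENV{S}''$ and reassemble the full judgment via \nameref{ts2_type_termination_envs}. If $X' = X$, then $\ENV{F}' = \ENV{P}$, and what remains is to show $\Gamma, (\code{this} : \Gamma(X)) \vdash \ENV{P}\EXTEND{p}{v}$; once that is in hand, one further application of \nameref{ts2_type_termination_envs} yields the goal.

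The main remaining step is therefore an auxiliary sub-induction on the field environment $\ENV{P}$ using rule \nameref{ts2_type_termination_envf}. Concretely: assuming $\Gamma, \Delta_0 \vdash \ENV{P}$ with $\Delta_0(\code{this}) = I$ where $I = \Gamma(X)$, $p \in \DOM{\ENV{P}}$, $\Gamma(I)(p) = B$, and $\Gamma, \Delta_0 \vdash v : B$, I derive $\Gamma, \Delta_0 \vdash \ENV{P}\EXTEND{p}{v}$ by splitting on whether the head field name matches $p$: if it does, overwrite the value by $v$ and reapply \nameref{ts2_type_termination_envf} (using that $\Gamma(I)(p) = B$ on both sides); otherwise, appeal to the sub-IH on the tail and repack.

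The only delicate point is that the outer hypothesis types $v$ under an arbitrary $\Delta$, whereas the inner step needs $v$ typed under $\Delta_0 = \code{this} : \Gamma(X)$. I would discharge this by observing that values $v \in \VALUES$ are typed by rule \nameref{ts2_type_termination_val} using only $\Gamma$ (for addresses $v \in \ANAMES$) and no variable context at all (for $v \in \mathbb{B} \cup \mathbb{Z}$), so $\Gamma, \Delta \vdash v : B$ immediately yields $\Gamma, \Delta_0 \vdash v : B$ for any $\Delta_0$. I expect this $\Delta$-independence of value typing to be the only subtlety; the nested induction is otherwise entirely mechanical and parallels the proof of Lemma~\ref{lemma:update_variables}.
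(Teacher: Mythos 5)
Your proposal is correct and matches the paper's own (very terse) argument: the paper proves this lemma by induction on the structure of $\ENV{S}$ and $\ENV{P}$ using rules \nameref{ts2_type_termination_envs} and \nameref{ts2_type_termination_envf}, which is exactly your outer/inner induction. Your additional observation that value typing via \nameref{ts2_type_termination_val} is independent of the local context $\Delta$ is a detail the paper leaves implicit, and it correctly discharges the only non-mechanical step.
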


Next, we need a lemma stating that, if $e$ has a type $B$ and $e$ evaluates to a value $v$ relative to well-typed environments $\ENV{SV}$, then $v$ will indeed be a value of type $B$.

\begin{lemma}[Safety for expressions]\label{lemma:safety_expressions}
Assume $\Gamma \vdash \ENV{S}$ and $\Gamma, \Delta \vdash \ENV{V}$ and $\Gamma, \Delta \vdash e : B$.
If $\ENV{SV} \vdash e \trans v$, then $\Gamma, \Delta \vdash v : B$.
\end{lemma}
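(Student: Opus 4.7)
The plan is to prove the lemma by structural induction on the derivation of the typing judgment $\Gamma, \Delta \vdash e : B$, performing case analysis on the last rule used. In each case, I would then do an inner inversion on the evaluation derivation $\ENV{SV} \vdash e \trans_e v$, which is forced since the semantics rules for expressions are syntax-directed.

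The two straightforward cases are \nameref{ts2_type_termination_val} (where $e = v$ and the evaluation is by \nameref{ts2_expr_val}, so we conclude immediately) and \nameref{ts2_type_termination_op} (where evaluation is by \nameref{ts2_expr_op}; we apply the induction hypothesis componentwise to obtain $\Gamma, \Delta \vdash \VEC{v} : \VEC{B}$, then use the assumed typing of $\op$ to conclude $\Gamma, \Delta \vdash v : B$). The variable case \nameref{ts2_type_termination_var} is where the assumption $\Gamma, \Delta \vdash \ENV{V}$ gets used for the first time: from $\Delta(x) = B$ and $\ENV{V}(x) = v$, inverting the well-typedness of $\ENV{V}$ (via rule \nameref{ts2_type_termination_envv}, possibly iteratively along the sequence) yields $\Gamma, \Delta \vdash v : B$. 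For the subsumption rule \nameref{ts2_type_termination_subs_int1}/\nameref{ts2_type_termination_subs_int2} wrapped by t-subs, the induction hypothesis gives $\Gamma, \Delta \vdash v : B_1$ and we re-apply t-subs with the same subtyping derivation to land at $B_2$.

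The key obstacle — and the case that really exercises all three hypotheses of the lemma — is \nameref{ts2_type_termination_field}, with $e = e'.p$. Here evaluation must be by \nameref{ts2_expr_field}, giving $\ENV{SV} \vdash e' \trans_e X$ with $\ENV{S}(X)(p) = v$, while typing gives $\Gamma, \Delta \vdash e' : I$ and $\Gamma(I)(p) = B$. The induction hypothesis applied to $e'$ yields $\Gamma, \Delta \vdash X : I$; by inverting \nameref{ts2_type_termination_val} for address names this forces $\Gamma(X) = I$. Then from $\Gamma \vdash \ENV{S}$, iterated inversion through \nameref{ts2_type_termination_envs} identifies the pair $(X, \ENV{F})$ with $\ENV{S}(X) = \ENV{F}$ and produces $\Gamma, (\code{this} : I) \vdash \ENV{F}$; finally, iterated inversion through \nameref{ts2_type_termination_envf} using $\Gamma(I)(p) = B$ and $\ENV{F}(p) = v$ delivers $\Gamma, (\code{this} : I) \vdash v : B$. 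Since values are typed by \nameref{ts2_type_termination_val}, which does not depend on $\Delta$, we can replace the local component freely and obtain $\Gamma, \Delta \vdash v : B$ as required.

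The main subtlety I anticipate is precisely this last step: justifying the change of $\Delta$ when transporting a value's type out of one context into another. If the simple observation that \nameref{ts2_type_termination_val} is $\Delta$-independent is deemed too informal, I would instead state and prove an auxiliary \emph{weakening/exchange lemma for values}, asserting that $\Gamma, \Delta_1 \vdash v : B$ implies $\Gamma, \Delta_2 \vdash v : B$ for any $\Delta_2$, which is immediate by case analysis on $v$ using \nameref{ts2_type_termination_val}. With this in hand, the induction proceeds uniformly and the four cases above close the argument.
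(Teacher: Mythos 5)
Your proof is correct and takes essentially the same approach as the paper, whose entire proof of this lemma is the single line ``By induction on the structure of $e$''; your induction on the typing derivation, with inversion of the syntax-directed evaluation rules in each case, is that same induction spelled out in full, including the $\Delta$-independence of value typing needed to transport the field's type out of the contract's context. The one detail worth tightening is that in the field case, inverting $\Gamma, \Delta \vdash X : I$ need not force $\Gamma(X) = I$ exactly once subsumption on interface names is in play; however, the subtyping rules for interfaces guarantee that $\Gamma(\Gamma(X))(p)$ is a subtype of $B$, so the conclusion still follows by one further application of subsumption.
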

\begin{proof}
By induction on the structure of $e$.
\end{proof}

We are now ready to prove subject reduction.


\begin{proof}[of Theorem~\ref{thm:subject_reduction}]
$\Gamma, \Delta \vdash \CONF{Q, \ENV{SV}, g}$ was concluded by \nameref{ts2_type_termination_cfg}; from the premises and side condition of this rule, we know that:
\begin{itemize}
  \item $\Gamma, \Delta \vdash Q : n$, 
  \item $\Gamma \vdash \ENV{S}$, 
  \item $\Gamma, \Delta \vdash \ENV{V}$, and 
  \item $n < g$.
\end{itemize}

\noindent The transition is of the form
\begin{equation}
\label{eq:red}
  \ENV{T} \vdash \CONF{Q, \ENV{SV}, g} \trans \CONF{Q', \ENV{SV}', g'} 
\end{equation} 

\noindent and our goal is therefore to show that $\Gamma, \EXTRACT{Q} \vdash \CONF{Q', \ENV{SV}', g'}$ also holds.
This must again be concluded by \nameref{ts2_type_termination_cfg}; hence, we must show that the premises of this rule, i.e.
\begin{itemize}
  \item $\Gamma, \EXTRACT{Q} \vdash Q' : n'$, 
  \item $\Gamma \vdash \ENV{S}'$, 
  \item $\Gamma, \EXTRACT{Q} \vdash \ENV{V}'$, and 
  \item $n' < g'$
\end{itemize}

\noindent are all satisfied.
We therefore proceed by case analysis on how the transition \eqref{eq:red} has been inferred (by considering the rules in Figure~\ref{fig:tinysolgas_semantics_statements1_sss}):
\begin{itemize}

  \item Suppose the transition was concluded by \nameref{ts2_sss_skip}.
  Then the stack is of the form $\code{skip} :: Q$, and \eqref{eq:red} is of the form
  \begin{equation*}
    \ENV{T} \vdash \CONF{\code{skip} :: Q, \ENV{SV}, g} \trans \CONF{Q, \ENV{SV}, g-1} 
  \end{equation*}
  Now, $\Gamma, \Delta \vdash \code{skip} :: Q : n$ must have been concluded by \nameref{ts2_type_termination_stm}, since the top element on the stack is a statement.
  The conclusion tells us that $n = n_1 + n_2$, and from the premise we have that
  \begin{align*}
    \Gamma, \Delta & \vdash \code{skip} : n_1 \\
    \Gamma, \Delta & \vdash Q : n_2
  \end{align*}

  Here, $\Gamma, \Delta \vdash \code{skip} : n_1$ must have been concluded by \nameref{ts2_type_termination_skip}, which tells us that $n_1 = 1$.
  Thus $n = 1 + n_2$, and therefore $n' = n_2 = n-1$.

  Finally, $\EXTRACT{\code{skip} :: Q} = \Delta$ by case \eqref{extract6} of Definition~\ref{def:extraction_function}.
  Putting this together, we thus have the following:
  \begin{itemize}
    \item $\Gamma, \Delta \vdash Q : n-1$ as argued above,
    \item $\Gamma \vdash \ENV{S}$ by assumption, since the transition does not alter $\ENV{S}$,
    \item $\Gamma, \Delta \vdash \ENV{V}$ by assumption, since the transition does not alter $\ENV{V}$,
    \item $n' < g'$, since $n' = n-1$, $g' = g-1$, and $n < g$.
  \end{itemize}

  \vskip1em

  \item Suppose that \nameref{ts2_sss_seq} was used; then \eqref{eq:red} is
  \begin{equation*}
    \ENV{T} \vdash \CONF{\code{$S_1$;$S_2$} :: Q, \ENV{SV}, g} \trans \CONF{S_1 :: S_2 :: Q, \ENV{SV}, g}
  \end{equation*}
Furthermore, $\Gamma, \Delta \vdash \CONF{\code{$S_1$;$S_2$} :: Q} : n$ was concluded by \nameref{ts2_type_termination_stm}, where $n = n_1 + n_2$, and with premises
  \begin{align*}
    \Gamma, \Delta & \vdash \code{$S_1$;$S_2$} : n_1 \\
    \Gamma, \Delta & \vdash Q : n_2
  \end{align*}

  \noindent of which the first was concluded by \nameref{ts2_type_termination_seq}.
  Thus $n_1 = n_1' + n_1'' + 1$ from the conclusion of that rule, and
  \begin{align*}
    \Gamma, \Delta & \vdash S_1 : n_1' \\
    \Gamma, \Delta & \vdash S_2 : n_1''
  \end{align*}

  \noindent from its premises.

  As $\EXTRACT{\code{$S_1$;$S_2$} :: Q'} = \Delta$ by case \eqref{extract6} of Definition~\ref{def:extraction_function}, we can therefore conclude 
  \begin{align*}
    \Gamma, \Delta & \vdash S_2 :: Q : n_1'' + n_2              \\
    \Gamma, \Delta & \vdash S_1 :: S_2 :: Q : n_1' + n_1'' + n_2
  \end{align*}

  \noindent by repeated application of rule \nameref{ts2_type_termination_stm}.
  Thus we conclude the following:
  \begin{itemize}
    \item $\Gamma, \Delta \vdash S_1 :: S_2 :: Q : n_1' + n_1'' + n_2$ as argued above, 
    \item $\Gamma \vdash \ENV{S}$ by assumption, since the transition does not alter $\ENV{S}$, 
    \item $\Gamma, \Delta \vdash \ENV{V}$ by assumption, since the transition does not alter $\ENV{V}$,
    \item $n_1' + n_1'' + n_2 < g$, since we know that $n_1' + n_1'' + 1 + n_2 < g$, and the transition does not consume any gas.
  \end{itemize}
  \vskip1em

  \item Suppose that \nameref{ts2_sss_if} was used; then \eqref{eq:red} is
  \begin{equation*}
    \ENV{T} \vdash  \CONF{\code{if $e$ then $S_{\TRUE}$ else $S_{\FALSE}$} :: Q, \ENV{SV}, g} \trans \CONF{S_b :: Q, \ENV{SV}, g-1} 
  \end{equation*}
Furthermore, $\Gamma, \Delta \vdash \code{if $e$ then $S_\TRUE$ else $S_\FALSE$} :: Q : n$ was concluded by \nameref{ts2_type_termination_stm}, where $n = n_1 + n_2$, and with premises
  \begin{align*}
    \Gamma, \Delta & \vdash \code{if $e$ then $S_\TRUE$ else $S_\FALSE$} : n_1 \\
    \Gamma, \Delta & \vdash Q : n_2
  \end{align*}

  \noindent of which the first was concluded by \nameref{ts2_type_termination_if}.
  Thus $n_1 = \max(n_\TRUE, n_\FALSE) + 1$ from the conclusion of that rule, and 
  \begin{align*}
    \Gamma, \Delta & \vdash S_\TRUE : n_\TRUE \\
    \Gamma, \Delta & \vdash S_\FALSE : n_\FALSE 
  \end{align*}

  \noindent from its premises.
  Say that $n_b = \max(n_\TRUE, n_\FALSE)$, then $n = n_b + 1 + n_2$.

  Finally, $\EXTRACT{\code{if $e$ then $S_\TRUE$ else $S_\FALSE$} :: Q} = \Delta$ by case \eqref{extract6} of Definition~\ref{def:extraction_function}.
  Thus we can conclude the following:
  \begin{itemize}
    \item $\Gamma, \Delta \vdash S_b :: Q : n_b + n_2$, which we conclude by \nameref{ts2_type_termination_stm}. 
    \item $\Gamma \vdash \ENV{S}$ by assumption, since the transition does not alter $\ENV{S}$.
    \item $\Gamma, \Delta \vdash \ENV{V}$ by assumption, since the transition does not alter $\ENV{V}$.
    \item $n_b + n_2 < g - 1$, since we know that $n_b + 1 + n_2 < g$.
  \end{itemize}
  \vskip1em

  \item Suppose that \nameref{ts2_sss_fortrue} was used; then \eqref{eq:red} is
  \begin{equation*}
    \ENV{T} \vdash \CONF{\code{for $e$ do $S$} :: Q, \ENV{SV}, g} \trans \CONF{S :: \code{for $v'$ do $S$} :: Q, \ENV{SV}, g-1} 
  \end{equation*}

  \noindent with $\ENV{SV} \vdash e \trans v$, $v \geq 1$, and $v' = v-1$ as premises.
  Then $\Gamma, \Delta \vdash \code{for $e$ do $S$} :: Q : n$ was concluded by \nameref{ts2_type_termination_stm}, where 
  \begin{equation*}
    n = \max(1, u (n_1' + 1)+1) + n_2
  \end{equation*}

  \noindent and with premises
  \begin{align*}
    \Gamma, \Delta & \vdash \code{for $e$ do $S$} : \max(1, u (n_1' + 1)+1) \\
    \Gamma, \Delta & \vdash Q : n_2
  \end{align*}

  \noindent of which the first was concluded by \nameref{ts2_type_termination_loop}.
  From the premises of this rule, we get that
  \begin{align*}
    \Gamma, \Delta & \vdash e : \TINT^u_\ell \\
    \Gamma, \Delta & \vdash S : n_1'
  \end{align*}

  As we know that $v \geq 1$, this implies that $u \geq 1$, since $v \leq u$ by Lemma~\ref{lemma:safety_expressions}.
  Thus 
  \begin{equation*}
    \max(1, u (n_1' + 1)+1) = u (n_1' + 1)+1 
  \end{equation*} 

  \noindent and therefore 
  \begin{equation*} 
    n = u (n_1' + 1)+1 + n_2 
  \end{equation*}

  \noindent We can rewrite this as 
  \begin{equation*} 
    n = (u-1) (n_1' + 1) + (n_1' + 1) +1 + n_2 
  \end{equation*}

  Assume w.l.o.g.\@ that $u = v$, since this is the worst case w.r.t.\@ the number of steps required.
  Then $u-1 = v-1 = v'$, and we can therefore rewrite the above as 
  \begin{equation*} 
    n = v'(n_1' + 1) + (n_1' + 1)+1 + n_2 
  \end{equation*}

  By case \eqref{extract6} of Definition~\ref{def:extraction_function}, we have that $\EXTRACT{\code{for $e$ do $S$} :: Q} = \Delta$.
  Furthermore, $\Gamma, \Delta \vdash v' : \TINT^{v'}_{v'}$ by rule \nameref{ts2_type_termination_val}.
  We can then conclude the following:
  \begin{align*}
    \Gamma, \Delta & \vdash \code{for $v'$ do $S$} : \max(1, v'(n_1' + 1)+1)                        & \text{by \nameref{ts2_type_termination_loop}} \\
    \Gamma, \Delta & \vdash \code{for $v'$ do $S$} :: Q : \max(1, v'(n_1' + 1)+1) + n_2             & \text{by \nameref{ts2_type_termination_stm}}  \\
    \Gamma, \Delta & \vdash S :: \code{for $v'$ do $S$} :: Q : n_1' + \max(1, v'(n_1' + 1)+1) + n_2 & \text{by \nameref{ts2_type_termination_stm}}
  \end{align*}

  Say that $n' = n_1' + \max(1, v'(n_1' + 1)+1) + n_2$.
  As we know that $v$ is positive, we also know that $v'$ cannot be negative.
  We then distinguish two cases:
  \begin{enumerate}
    \item If $v' = 0$,
      then $\max(1, v'(n_1' + 1)+1) = 1$, and so 
      $n' = n_1' + 1 + n_2$.
      Clearly, $n' < n-1$; as we know that $n < g$, we therefore have that $n' < g-1$.

    \item If $v' > 0$,
      then $\max(1, v'(n_1' + 1)+1) = v' (n_1' + 1)+1$.
      Assuming still the worst case $u = v$, we thus have that
      \begin{align*}
        n  & = v'(n_1' + 1) + (n_1' + 1) + 1 + n_2 \\
        n' & = v'(n_1' + 1) + (n_1' + 1) + n_2 
      \end{align*}

      \noindent so $n' = n-1$.
      Then, as we know that $n < g$, we therefore also have that $n' < g-1$.
  \end{enumerate}

  Thus we can conclude the following:
  \begin{itemize}
    \item $\Gamma, \Delta \vdash S :: \code{for $v'$ do $S$} :: Q : n'$ as argued above.
    \item $\Gamma \vdash \ENV{S}$ by assumption, since the transition does not alter $\ENV{S}$.
    \item $\Gamma, \Delta \vdash \ENV{V}$ by assumption, since the transition does not alter $\ENV{V}$.
    \item $n' < g - 1$ as argued above.
  \end{itemize}
  \vskip1em

  \item Suppose that \nameref{ts2_sss_forfalse} was used; then \eqref{eq:red} is
  \begin{equation*}
    \ENV{T}  \vdash \CONF{\code{for $e$ do $S$} :: Q, \ENV{SV}, g} \trans \CONF{Q, \ENV{SV}, g-1}
  \end{equation*}

  \noindent with $\ENV{SV} \vdash e \trans v$ and $v < 1$ as premises.
  Then $\Gamma, \Delta \vdash \code{for $e$ do $S$} :: Q : n$ was concluded by \nameref{ts2_type_termination_stm}, where $n = \max(1, u (n_1' + 1)+1) + n_2$, and with premises
  \begin{align*}
    \Gamma, \Delta & \vdash \code{for $e$ do $S$} : \max(1, u (n_1' + 1)+1) \\
    \Gamma, \Delta & \vdash Q : n_2
  \end{align*}

  \noindent of which the first was concluded by \nameref{ts2_type_termination_loop}.
  From the premises of this rule, we get that
  \begin{align*}
    \Gamma, \Delta & \vdash e : \TINT^u_\ell \\
    \Gamma, \Delta & \vdash S : n_1'
  \end{align*}

  We distinguish three cases:
  \begin{enumerate}
    \item $e$ is just a single number $v$
      (this would e.g.\@ be the case if the loop had previously executed at least once).
      In that case, $\Gamma, \Delta \vdash v : \TINT^v_v$ by rule \nameref{ts2_type_termination_val} and, as we know $v < 1$, then $\max(1, v (n_1' + 1)+1) = 1$, hence $n = 1 + n_2$;
      since $n < g$, we therefore have that $n_2 < g-1$.

    \item $e$ is a complex expression (containing operations and/or variables)
      and $u \geq 1$ (this is possible even though $v < 1$).
      Then $\max(1, u (n_1' + 1)+1) = u (n_1' + 1)+1$, hence $n = u (n_1' + 1) +1 + n_2$.
      Clearly, $1 < u (n_1' + 1) + 1$.
      As we know that $n < g$, we therefore have that $n_2 < g-1$.

    \item $e$ is a complex expression (as in the second case), but $u < 1$.
      In this case, $\max(1, u (n_1' + 1) + 1) = 1$, hence $n = 1 + n_2$ as in the first case.
      Thus, as we know that $n < g$, we therefore have that $n_2 < g-1$.
  \end{enumerate}

  By case \eqref{extract6} of Definition~\ref{def:extraction_function}, we have that $\EXTRACT{\code{for $e$ do $S$} :: Q} = \Delta$.
  Thus we can conclude the following:
  \begin{itemize}
    \item $\Gamma, \Delta \vdash Q : n_2$ by assumption.
    \item $\Gamma \vdash \ENV{S}$ by assumption, since the transition does not alter $\ENV{S}$.
    \item $\Gamma, \Delta \vdash \ENV{V}$ by assumption, since the transition does not alter $\ENV{V}$.
    \item $n_2 < g - 1$, as argued above.
  \end{itemize}
  \vskip1em

  \item Suppose that \nameref{ts2_sss_decv} was used; then \eqref{eq:red} is
  \begin{align*}
    \ENV{T} & \vdash \CONF{\code{B $x$ := $e$ in $S$} :: Q, \ENV{SV}, g}       \\
            & \qquad \trans \CONF{S :: \DEL{x} :: Q, \ENV{S}, ((x,v,B), \ENV{V}), g-1} 
  \end{align*}

  \noindent Furthermore, $\Gamma, \Delta \vdash \code{var $x$ := $e$ in $S$} :: Q : n$ was concluded by \nameref{ts2_type_termination_stm}, where $n = n_1 + n_2$, and with premises
  \begin{align*}
    \Gamma, \Delta & \vdash \code{B $x$ := $e$ in $S$} : n_1 \\
    \Gamma, \Delta & \vdash Q : n_2
  \end{align*}

  \noindent of which the first was concluded by \nameref{ts2_type_termination_decv}.
  From the premises of this rule, we get that
  \begin{align*}
    \Gamma, \Delta        & \vdash e : B           \\ 
    \Gamma, (\Delta, x : B) & \vdash S : n_1'
  \end{align*}

  \noindent and from the conclusion that $n_1 = n_1' + 2$.
  Thus $n = n_1' + 2 + n_2$.

  From Definition~\ref{def:extraction_function}, we get that 
  \begin{align*}
    \EXTRACT{\code{B $x$ := $e$ in $S$} :: Q} & = \Delta, x:B & \text{by case \eqref{extract3}} \\
    \EXTRACT[\Delta, x:B]{\DEL{x} :: Q}       & = \Delta      & \text{by case \eqref{extract4}}
  \end{align*}

  We can then conclude the following:
  \begin{align*}
    \Gamma, (\Delta, x:B) & \vdash \DEL{x} :: Q : n_2              & \qquad\text{by \nameref{ts2_type_termination_del}} \\
    \Gamma, (\Delta, x:B) & \vdash S :: \DEL{x} :: Q : n_1' + n_2  & \qquad\text{by \nameref{ts2_type_termination_stm}} \\
    \Gamma, (\Delta, x:B) & \vdash (x, v, B), \ENV{V}              & \qquad\text{by \nameref{ts2_type_termination_envv}}
  \end{align*}

  \noindent where, in the last case, the premise $\Gamma, (\Delta, x:B) \vdash v:B$ holds by Lemma~\ref{lemma:safety_expressions}, and is concluded by \nameref{ts2_type_termination_val} (or with a subtyping rule).
  In sum, we thus have the following:
  \begin{itemize}
    \item $\Gamma, (\Delta, x:B) \vdash S :: \DEL{x} :: Q : n_1 + n_2$ as argued above, 
    \item $\Gamma \vdash \ENV{S}$ by assumption, since the transition does not alter $\ENV{S}$,
    \item $\Gamma, (\Delta, x:B) \vdash (x,v,B), \ENV{V}$ as argued above,
    \item $n_1' + n_2 < g-1$, since we know that $n_1' + 2 + n_2 < g$.
  \end{itemize}
  \vskip1em

  \item Suppose that \nameref{ts2_sss_assv} was used; then \eqref{eq:red} is 
  \begin{equation*}
    \ENV{T} \vdash \CONF{\code{$x$ := $e$} :: Q, \ENV{SV}, g} \trans \CONF{Q, \ENV{S}, \ENV{V}\EXTEND{x}{v}, g-1} 
  \end{equation*}
Furthermore, $\Gamma, \Delta \vdash \code{$x$ := $e$} :: Q : n$ was concluded by \nameref{ts2_type_termination_stm}, where $n = 1 + n_2$, and with premises
  \begin{align*}
    \Gamma, \Delta & \vdash \code{$x$ := $e$} : 1 \\
    \Gamma, \Delta & \vdash Q : n_2
  \end{align*}
 
  \noindent of which the first was concluded by \nameref{ts2_type_termination_assv}.
  From the premises of this rule, we get that $\Gamma, \Delta \vdash x : B$ and $\Gamma, \Delta \vdash e : B$.
  We know that $\ENV{SV} \vdash e \trans v$.
  We can then conclude:
  \begin{align*}
    \EXTRACT{\code{$x$ := $e$} :: Q} = & \Delta                & \qquad\text{by case \eqref{extract6} of Def.\@ \ref{def:extraction_function}} \\
    \Gamma, \Delta \vdash              & v : B                 & \qquad\text{by Lemma~\ref{lemma:safety_expressions}} \\
    \Gamma, \Delta \vdash              & \ENV{V}\EXTEND{x}{v}  & \qquad\text{by Lemma~\ref{lemma:update_variables}}
  \end{align*}

  In sum, we thus have the following:
  \begin{itemize}
    \item $\Gamma, \Delta \vdash Q : n_2$ as argued above,
    \item $\Gamma \vdash \ENV{S}$ by assumption, since the transition does not alter $\ENV{S}$,
    \item $\Gamma, \Delta \vdash \ENV{V}\EXTEND{x}{v}$ as argued above,
    \item $n_2 < g-1$, since we know that $1 + n_2 < g$.
  \end{itemize}
  \vskip1em

  \item Suppose that \nameref{ts2_sss_assf} was used; then \eqref{eq:red} is
  \begin{equation*}
    \ENV{T} \vdash \CONF{\code{this.$p$ := e} :: Q, \ENV{SV}, g} \trans \CONF{Q, \ENV{S}\EXTEND{X}{\ENV{P}\EXTEND{p}{v}}, \ENV{V}, g-1}  
  \end{equation*}
Furthermore, $\Gamma, \Delta \vdash \code{this.$p$ := e} :: Q : n$ was concluded by \nameref{ts2_type_termination_stm}, where $n = 1 + n_2$, and with premises
  \begin{align*}
    \Gamma, \Delta & \vdash \code{this.$p$ := e} : 1 \\
    \Gamma, \Delta & \vdash Q : n_2
  \end{align*}

  \noindent of which the first was concluded by \nameref{ts2_type_termination_assf}.
  From the premises of this rule, we get that $\Gamma, \Delta \vdash \code{this.$p$} : B$ and $\Gamma, \Delta \vdash e : B$.

  We know from the premises of \nameref{ts2_sss_assf} that $\ENV{SV} \vdash e \trans v$ and $\ENV{V}(\code{this}) = X$ and $\ENV{S}(X) = \ENV{P}$.
  We can then conclude:
  \begin{align*}
    \EXTRACT{\code{this.$p$ := $e$} :: Q} = & \Delta                                  & \qquad\text{by case \eqref{extract6} of Def.\@ \ref{def:extraction_function}} \\
    \Gamma, \Delta \vdash                   & v : B                                   & \qquad\text{by Lemma~\ref{lemma:safety_expressions}}                          \\
    \Gamma, \Delta \vdash                   & \ENV{S}\EXTEND{X}{\ENV{P}\EXTEND{p}{v}} & \qquad\text{by Lemma~\ref{lemma:update_fields}}
  \end{align*}

  In sum, we thus have the following:
  \begin{itemize}
    \item $\Gamma, \Delta \vdash Q : n_2$ as argued above,
    \item $\Gamma \vdash \ENV{S}\EXTEND{X}{\ENV{P}\EXTEND{p}{v}}$ as argued above,
    \item $\Gamma, \Delta \vdash \ENV{V}$ by assumption, since the transition does not alter $\ENV{V}$,
    \item $n_2 < g-1$, since we know that $1 + n_2 < g$.
  \end{itemize}
  \vskip1em

  \item Suppose that \nameref{ts2_sss_call} was used; then \eqref{eq:red} is
  \begin{equation*}
    \ENV{T} \vdash \CONF{\code{$e_1$.$f(\VEC{e})$:$e_2$} :: Q, \ENV{SV}, g} \trans \CONF{S :: \ENV{V} :: Q, \ENV{SV}', g-1}
  \end{equation*}
  Furthermore, $\Gamma, \Delta \vdash \code{$e_1$.$f(\VEC{e})$:$e_2$} :: Q : n$ was concluded by \nameref{ts2_type_termination_stm}, where $n = n_1 + 2 + n_2$, and with premises
  \begin{align*}
    \Gamma, \Delta & \vdash \code{$e_1$.$f(\VEC{e})$:$e_2$} : n_1 + 2 \\
    \Gamma, \Delta & \vdash Q : n_2
  \end{align*}

  \noindent of which the first was concluded by \nameref{ts2_type_termination_call}.
  From the premises of this rule, we get that
  \begin{align*}
    \Gamma, \Delta & \vdash e_1 : I            \\
    \Gamma, \Delta & \vdash \VEC{e} : \VEC{B}  \\
    \Gamma, \Delta & \vdash e_2 : \TINT^u_\ell
  \end{align*}

  \noindent and $\Gamma(I)(f) = \TPROC{\VEC{B}}^u_\ell : n_1$.

  We know from the premise of \nameref{ts2_sss_call} that $\ENV{SV} \vdash e_1 \trans X$, and by Lemma~\ref{lemma:safety_expressions} we get that $\Gamma, \Delta \vdash X : I$.

  We also know by assumption that $\Gamma \vdash \ENV{T}$, which was concluded by \nameref{ts2_type_termination_envt}.
  From its premises and side condition, we get that $\Gamma, \code{this} : I \vdash \ENV{M}$, where $\ENV{T}(X) = \ENV{M}$ by the premise of \nameref{ts2_sss_call}.
  This, in turn, was concluded by \nameref{ts2_type_termination_envm}; from the premise of this rule, we get that
  \begin{equation*}
    \Gamma, (\code{this}:I, \VEC{x}:\VEC{B}, \code{value}:\TINT^u_\ell, \code{sender}:\ITOP) \vdash S : n_1
  \end{equation*}

  \noindent and, from the premise of \nameref{ts2_sss_call}, that $\ENV{M}(f) = (f, (\VEC{x}, S))$.

  Finally, by case~\eqref{extract5} of Definition~\ref{def:extraction_function}, we have that
  \begin{equation*}
    \EXTRACT{\code{$e_1$.$f$($\VEC{e}$):$e_2$} :: Q} = \code{this}:I, \code{sender}:\ITOP, \code{value}:\TINT^u_\ell, \VEC{x}:\VEC{B} 
  \end{equation*}

  \noindent which gives us the contents of the new type environment after the transition.
  We can now conclude
  \begin{align*}
    \Gamma, \EXTRACT{\code{$e_1$.$f$($\VEC{e}$):$e_2$} :: Q} & \vdash \ENV{V} :: Q : n_2             & \qquad\text{by \nameref{ts2_type_termination_ctx}} \\
    \Gamma, \EXTRACT{\code{$e_1$.$f$($\VEC{e}$):$e_2$} :: Q} & \vdash S :: \ENV{V} :: Q : n_1 + n_2  & \qquad\text{by \nameref{ts2_type_termination_stm}} 
  \end{align*}

  In sum, we thus have the following:
  \begin{itemize}
    \item $\Gamma, \EXTRACT{\code{$e_1$.$f$($\VEC{e}$):$e_2$} :: Q} \vdash S :: \ENV{V} :: Q : n_1 + n_2$ as argued above,
    \item $\Gamma \vdash \ENV{S}$ by assumption, since the transition does not alter $\ENV{S}$,
    \item $\Gamma, \EXTRACT{\code{$e_1$.$f$($\VEC{e}$):$e_2$} :: Q} \vdash \ENV{V}'$ by simple inspection of the new $\ENV{V}'$, which contains only the bindings for the formal parameters of $f$,
    \item $n_1 + n_2 < g-1$, since we know that $n_1 + 2 + n_2 < g$.
  \end{itemize}
  \vskip1em

  \item Suppose that \nameref{ts2_sss_throw} was used; then \eqref{eq:red} is 
  \begin{equation*}
    \ENV{T} \vdash \CONF{\code{throw} :: Q, \ENV{SV}, g} \trans \CONF{\EXC{\code{pge}} :: Q, \ENV{SV}, g}
  \end{equation*}

  \noindent and no further transition from the resulting configuration is then possible.
  We know that $\Gamma, \Delta \vdash \code{throw} :: Q : n$ was concluded by \nameref{ts2_type_termination_stm}, where $n = 1 + n_2$, and with premises
  \begin{align*}
    \Gamma, \Delta & \vdash \code{throw} : 1 \\
    \Gamma, \Delta & \vdash Q : n_2
  \end{align*}

  \noindent of which the first was concluded by \nameref{ts2_type_termination_throw}.
  By case~\eqref{extract6} of Definition~\ref{def:extraction_function}, we have that $\EXTRACT{\code{throw} :: Q} = \Delta$.
  Thus we can conclude the following:
  \begin{itemize}
    \item $\Gamma, \Delta \vdash \EXC{\code{pge}} :: Q : 0$ by rule \nameref{ts2_type_termination_exc},
    \item $\Gamma \vdash \ENV{S}$ by assumption, since the transition does not alter $\ENV{S}$,
    \item $\Gamma, \Delta \vdash \ENV{V}$ by assumption, since the transition does not alter $\ENV{V}$,
    \item $0 < g$, by the side condition of \nameref{ts2_sss_throw}.
  \end{itemize}
  \vskip1em

  \item Suppose that \nameref{ts2_sss_delv} was used; then \eqref{eq:red} is 
  \begin{equation*}
    \ENV{T} \vdash \CONF{\DEL{x} :: Q, \ENV{S}, (x,v,B) : \ENV{V}, g} \trans \CONF{Q, \ENV{SV}, g}
  \end{equation*}
Furthermore, $\Gamma, \Delta, x:B \vdash \DEL{x} :: Q : n$ was concluded by \nameref{ts2_type_termination_del}.
  From its premise, we have that
  \begin{equation*}
    \Gamma, \EXTRACT[\Delta, x:B]{\DEL{x} :: Q} \vdash Q : n
  \end{equation*}

  \noindent and by case~\eqref{extract4} of Definition~\ref{def:extraction_function}, we have that
  \begin{equation*}
    \EXTRACT[\Delta, x:B]{\DEL{x} :: Q} = \Delta
  \end{equation*}

  \noindent so we know that $\Gamma, \Delta \vdash Q : n$.
  Finally, by Lemma~\ref{lemma:envv_strengthening} and $\Gamma, (\Delta, x:B) \vdash (x,v,B) : \ENV{V}$, we obtain that $\Gamma, \Delta \vdash \ENV{V}$.
  Thus we can conclude the following:
  \begin{itemize}
    \item $\Gamma, \Delta \vdash Q : n$ as argued above,
    \item $\Gamma \vdash \ENV{S}$ by assumption, since the transition does not alter $\ENV{S}$,
    \item $\Gamma, \Delta \vdash \ENV{V}$ as argued above,
    \item $n < g$ by assumption, from the side condition of \nameref{ts2_type_termination_cfg}, which was used to type the initial configuration.
  \end{itemize}
  \vskip1em

  \item Suppose that \nameref{ts2_sss_return} was used; then \eqref{eq:red} is 
  \begin{equation*}
    \ENV{T} \vdash \CONF{\ENV{V}' :: Q, \ENV{SV}, g} \trans \CONF{Q, \ENV{S}, \ENV{V}', g}
  \end{equation*}
  \noindent and $\Gamma, \Delta \vdash \ENV{V}' :: Q : n$ was concluded by \nameref{ts2_type_termination_ctx}.
  From its premise, we have that
  \begin{equation*}
    \Gamma, \EXTRACT{\ENV{V}' :: Q} \vdash Q : n
  \end{equation*}

  \noindent and, by cases~\eqref{extract1} and~\eqref{extract2} of Definition~\ref{def:extraction_function}, we have that $\EXTRACT{\ENV{V}' :: Q}$ yields a type environment for local variables which is formed by the variable names and the associated type information stored in $\ENV{V}'$.
  Clearly, $\Gamma, \EXTRACT{\ENV{V}' :: Q} \vdash \ENV{V}'$ for any $Q$.
  Thus we can conclude the following:
  \begin{itemize}
    \item $\Gamma, \EXTRACT{\ENV{V}' :: Q} \vdash Q : n$ as argued above,
    \item $\Gamma \vdash \ENV{S}$ by assumption, since the transition does not alter $\ENV{S}$,
    \item $\Gamma, \EXTRACT{\ENV{V}' :: Q} \vdash \ENV{V}'$ as argued above,
    \item $n < g$ by assumption, from the side condition of \nameref{ts2_type_termination_cfg}, which was used to type the initial configuration.
  \end{itemize}
\end{itemize}

In all of the above cases, we have shown that we can satisfy the premises of \nameref{ts2_type_termination_cfg} after the transition step, whenever the step is concluded by any of the ordinary transition rules for statements.
What remains now is to consider the exception rules:
\begin{itemize}
  \item Suppose the transition was concluded by \nameref{ts2_sss_oog}.
  Then \eqref{eq:red} is 
  \begin{equation*}
    \ENV{T} \vdash \CONF{S :: Q, \ENV{SV}, 0} \trans \CONF{\EXC{\code{oog}} :: S :: Q, \ENV{SV}, 0}
  \end{equation*}

  This case holds vacuously, since $g = 0$, but from the side condition of \nameref{ts2_type_termination_cfg} we know that $n < g$, which would imply that $n$ is negative, which is impossible.
  Thus $\CONF{S :: Q, \ENV{SV}, 0}$ cannot be well-typed for any $\Gamma, \Delta$.

  \item Suppose any other exception rule was used.
  Then \eqref{eq:red} is
  \begin{equation*}
    \ENV{T} \vdash \CONF{S :: Q, \ENV{SV}, g} \trans \CONF{\EXC{l} :: S :: Q, \ENV{SV}, g}
  \end{equation*}

  \noindent where $l$ is one of the two labels \code{rte} or \code{neg}.
  In either case, we have that 
  \begin{equation*}
    \Gamma, \Delta \vdash \EXC{l} :: S :: Q : 0
  \end{equation*}

  \noindent for any $\Gamma, \Delta$ by rule \nameref{ts2_type_termination_exc}.
  Thus we can conclude the following:
  \begin{itemize}
    \item $\Gamma, \Delta \vdash \EXC{l} :: S :: Q : 0$ as argued above,
    \item $\Gamma \vdash \ENV{S}$ by assumption, since the transition does not alter $\ENV{S}$,
    \item $\Gamma, \Delta \vdash \ENV{V}$ by assumption, since the transition does not alter $\ENV{V}$,
    \item $0 < g$, since we know that $g$ is positive.
  \end{itemize}
\end{itemize}

This concludes the proof.
\end{proof}

\end{document}